\date{}
    \newcommand{\bigO}[1]{\ensuremath{\mathop{}\mathopen{}\mathcal{O}\mathopen{}\left(#1\right)}}
\newenvironment{ybox}{
  \setlength{\FrameSep}{1mm}
  \setlength{\FrameRule}{0mm}
  
  \MakeFramed {\FrameRestore}}
{\endMakeFramed}
\newenvironment{bbox}{
	  \setlength{\FrameSep}{1mm}
	  \setlength{\FrameRule}{0mm}
  
  \MakeFramed {\FrameRestore}}
{\endMakeFramed}
\newcommand{\ps}{\mathsf{p}} 
\newcommand{\tr}{\mathtt{Tr}}
\newcommand{\X}{\textbf{X}}
\newcommand{\Y}{\textbf{Y}}
\newcommand{\Z}{\textbf{Z}}
\newcommand{\diag}{\mathsf{diag}}
\newtheorem{theorem}{Theorem}
\newtheorem{lemma}{Lemma}
\newtheorem{proposition}{Proposition}
\newtheorem{example}{Example}
\newtheorem{discussion}{Discussion}
\newenvironment{proof}{{\noindent{\bf Proof:}}}{$\hfill\Box$}
\begin{document}
\title{Sampling and Distortion Tradeoffs for Bandlimited Periodic Signals}

\author{Elaheh Mohammadi and Farokh Marvasti\\\small Advanced Communications Research Institute (ACRI) \\ \small Department of Electrical Engineering\\\small
Sharif University of Technology\footnote{This paper was presented in part at the 2015 Sampling Theory and Applications
(SampTA) conference.}}

\maketitle
\vspace{-0.8cm}
\begin{abstract} In this paper, the optimal sampling strategies  (uniform or nonuniform) and distortion tradeoffs for Gaussian bandlimited periodic signals with additive white Gaussian noise are studied.  Our emphasis is on characterizing the optimal sampling locations as well as the optimal pre-sampling filter to minimize the reconstruction distortion.  We first show that to achieve the optimal distortion, no pre-sampling filter is necessary for any arbitrary sampling rate. Then, we provide a complete characterization of optimal distortion for low and high sampling rates (with respect to the signal bandwidth). We also provide bounds on the reconstruction distortion for rates in the intermediate region. It is shown that nonuniform sampling outperforms uniform sampling for low sampling rates. In addition, the optimal nonuniform sampling set is robust with respect to missing sampling values. On the other hand, for the sampling rates above the Nyquist rate, the uniform sampling strategy is optimal.
An extension of the results for random discrete periodic signals is discussed with simulation results indicating that the intuitions from the continuous domain carry over to the discrete domain. Sparse signals are also considered, where it is shown that uniform sampling is optimal above the Nyquist rate. 
\end{abstract}

\section{Introduction}
Shannon's rate distortion theory finds the optimal compression rate for a given distortion for an i.i.d.\ discrete signal. For continuous signals, Shannon assumes that the signal is sampled  (noiselessly)  at the Nyquist rate. Thus, in Shannon's protocol, distortion occurs at the quantization phase; no information about the signal is discarded in the sampling phase. But since only the end-to-end distortion is of importance, one can accept some distortion at the sampling phase, by sampling the signal below the Nyquist rate, or by assuming sampling noise. In other words, we can discard information about the signal in both the sampling and quantization phases. Sampling at a sub-Nyquist rate has the benefit of saving in the sampling rate, which can improve power and computational efficiency of the system \cite{timestampless, Walden}. Furthermore, choosing the sampling locations (\emph{nonuniform sampling}), one can further reduce the sampling rate.

Nonuniform sampling works by measuring signal values at a set of arbitrary locations in time. This can occur when we are either unable to uniformly sample the signal due to some physical constraints, or when we lose some of the samples after uniform sampling.  Furthermore, nonuniform sampling is helpful in dealing with aliasing \cite{Bretthorst}. Similar to uniform sampling, it has been shown that for reliable recovery, the average sampling rate must be at least twice the bandwidth of the signal \cite{Landau}.  See \cite{Marvasti} for an overview of nonuniform sampling.

Consider the problem of finding the best locations for sampling a continuous signal to minimize the reconstruction distortion. To find the optimal sampling points, one should utilize any available prior information about the structure of the signal. Furthermore, sampling noise should be taken into account. Due to its importance, nonuniform sampling and its stability analysis has been the subject of numerous  works, in particular for deterministic signals (for instance, see \cite{Lin}). However, there has been relatively less work for a fully Bayesian model of the signal and sampling noise, where one is interested in the statistical \emph{average} distortion of signal reconstruction over a class of signals. Some previous works along this line address the problem of finding the best locations of nonuniform sampling for minimizing reconstruction distortion. These works can be categorized according to their signal and sampling models. None of these works consider sampling of \emph{periodic bandlimited signals}, which is the topic of this paper. A summary of these works and their comparison with our work is provided later in the introduction. 

\textbf{System Model:} Unlike the previous works that consider aperiodic Gaussian stationary sources with a given autocorrelation function, or aperiodic Markov sources, herein we consider a \emph{stochastic continuous periodic signal} with period $T$. Any such signal is characterized by its values in one period  $[0,T]$. Conversely, from a practical perspective, if we are interested in a signal that is defined \emph{only} on a finite interval $[0,T]$, we can periodically extend it and view it as a  periodic signal with period $T$. Similarly, a finite set of $M$ nonuniform samples of the signal in $[0,T]$ can be periodically extended to generate a set of periodic nonuniform samples.

In this paper, we further assume the signal to be bandlimited, with at most $2(N_2-N_1+1)$ non-zero Fourier series coefficients from the frequency $N_1\omega_0$ to $N_2\omega_0$, where $\omega_0=2\pi/{T}$ is the fundamental frequency and $T$ is the signal period, \emph{i.e.,}
\begin{align}S(t)=\sum_{\ell=N_1}^{N_2}[A_\ell\cos(\ell\omega_0 t)+B_\ell\sin(\ell\omega_0 t)], \qquad t\in[0,T].\label{eqdefs22}
\end{align}
There are $2N\triangleq2(N_2-N_1+1) $ free variables $A_\ell$ and $B_\ell$ for $\ell\in[N_1:N_2]$. To obtain information about the signal, we can assume that all of the samples are taken at time instances $t_i$ in the interval $[0,T]$.  A noiseless sample at time $t_i\in [0,T]$, \emph{i.e.}, 
$$S(t_i)=\sum_{\ell=N_1}^{N_2}[A_\ell\cos(\ell\omega_0 t_i)+B_\ell\sin(\ell\omega_0 t_i)]$$
is a linear equation of coefficients $A_\ell$ and $B_\ell$ for $\ell\in[N_1:N_2]$. The $2N$ free variables $A_\ell$ and $B_\ell$ can be uniquely reconstructed from $M=2N$ linearly independent equations. But we consider taking $M$ \underline{\emph{noisy}} samples $S(t_i)+Z_i$ (at time instances $t_i$ that we choose) with sampling noise $Z_i$. The sampling noise can model quantization noise of an A/D convertor, or the noise incurred by transmitting the samples to a fusion center over a communication channel. Perfect reconstruction from noisy samples is not in general feasible, and distortion is inevitable. To minimize the distortion given a maximum number of $M$ samples, we optimize over sampling locations $t_i$ as well as a pre-sampling filter that we consider in our model. See Fig.~\ref{Fig1} for a description of our problem. 

The Fourier series coefficients ($A_\ell$ and $B_\ell$ for $\ell\in[N_1:N_2]$) and the sampling noise $Z_i$ are all assumed to be normal random variables. Therefore, both the input signal $S(t)$ and its reconstruction $\hat{S}(t)$  are random signals, and the reconstruction distortion
$$\text{Sampling Distortion}=\frac 1T\int_{0}^T|S(t)-\hat{S}(t)|^2dt,$$ 
 is also a  random variable. We wish to minimize the \text{Sampling Distortion}, which is a random variable.  To minimize the values that a random variable may take, besides minimizing its mean, we also wish to  quantify how far the distribution of the random variable is spread out around its mean.
The literature in sampling  theory only looks at minimizing the \emph{expected value} of the \text{Sampling Distortion}. However, we are considering \emph{variance} of the \text{Sampling Distortion} as a measure of the \emph{degree of concentration} of the \text{Sampling Distortion} around its mean (as variance is a measure of how far random realizations are spread out from their mean). One of the contributions of this paper is to raise the possibility that minimizing the expected value of error may potentially increase its variance. In such cases, just minimizing MMSE may not be practically appealing if it leads to large error variance.\footnote{
It is possible to conceive other measures that quantify the degree of concentration of reconstruction error around its mean via for instance higher order statistics (or other functions of the pdf). However, the focus of our work is to pick the common measure of variance to {make the point} about the importance of just minimizing the  expected value of error vs. also looking at its concentration around mean.}

  Herein, we consider minimizing both the expected value and variance of the \text{Sampling Distortion}, by choosing the best sampling points and the pre-sampling filter. We denote the minimum of the expected value and variance of the \text{Sampling Distortion} by $\mathsf{D}_{\min}$ and $\mathsf{V}_{\min}$, respectively. A small $\mathsf{D}_{\min}$ guarantees a good \emph{average} performance over all instances of the random signal, while a small $\mathsf{V}_{\min}$ guarantees that with high probability, we are close to the promised average distortion for a given random signal (see also Appendix \ref{AppendixA}). In essence, our goal  is to find the tradeoffs between the number of samples ($M$), the pre-sampling filter, the variance of noise ($\sigma^2$) and the optimal expected value and variance of distortion ($\mathsf{D}_{\min}$ and $\mathsf{V}_{\min}$).

\textbf{Overview of Main Results:}  To state the main results, let us make the following definitions based on the signal description given in equation \eqref{eqdefs22}:
\begin{itemize}
\item We call $(N_2-N_1+1)f_0=Nf_0$ the \emph{signal bandwidth} (of the bandlimited signal), where $f_0=1/T$. 
\item We call $2N_2f_0$ the \emph{Nyquist rate} (twice the maximum frequency of the signal). 
\item We call $M/T=Mf_0$ the \emph{sampling rate}. It is the number of total samples in a period $[0,T]$, divided by  period length $T$. If we periodically extend the $M$ samples (periodic nonuniform sampling), $M/T$ will be the number of samples taken \underline{per unit time}, hence called the sampling rate. 
\end{itemize}

We provide tight results or bounds on the tradeoffs among various parameters such as distortion, sampling rate and sampling noise. 
We first show that to achieve the optimal average and variance of distortion, no pre-sampling filter is necessary for any arbitrary number of samples $M$. Next, when the sampling rate ($Mf_0$) is below the signal bandwidth ($Nf_0$), \emph{i.e.,} $M\leq N$, we find the optimal average and variance of distortion, denoted by $\mathsf{D}_{\min}$ and $\mathsf{V}_{\min}$, respectively. Interestingly, we show that the minima of both $\mathsf{D}_{\min}$ and $\mathsf{V}_{\min}$ are obtained at the same sampling locations. Minimum occurs if we sample at any arbitrary subset of size $M$ of $\{0, T/N, 2T/N, 3T/N, \cdots, (N-1)T/N\}$. If $M<N$, this forms a \emph{nonuniform sampling} set. It is worth to note that the sampling locations only depend on the bandwidth of the signal; they are  optimal for all values of the noise variance ($\sigma^2$) and $N_1$. Moreover, the sampling points are robust with respect to missing samples. Note that the optimal sampling points are any arbitrary $M$ points from the set $\{0, T/N, 2T/N, 3T/N, \cdots, (N-1)T/N\}$. Thus, if we sample at these positions and we miss some samples, \emph{i.e.,}  getting $M'<M$ samples instead of $M$ samples, the set of $M'$ sampling points is still a subset of  $\{0, T/N, 2T/N, 3T/N, \cdots, (N-1)T/N\}$, and hence optimal. Finally, complete characterization of distortion in terms of $M$ and $\sigma^2,$  allows us to answer the question of  whether it is better  to collect a few accurate samples from a signal, or collect many inaccurate samples from the same signal (a problem related to selecting an appropriate $\Sigma\Delta$ modulator). 

For sampling rates ($Mf_0$) above the signal bandwidth ($Nf_0$), we provide  lower and upper bounds that are shown to be tight in some cases. When $Nf_0<Mf_0\leq 2Nf_0$, \emph{i.e.,} we find optimal average distortion when $N$ divides $2N_1-1$, \emph{i.e.,} $N|2N_1-1$, using a non-uniform set of sampling locations. In addition, if sampling rate ($Mf_0$)  is above the Nyquist rate ($2N_2f_0$), \emph{i.e.,} $M>2N_2$, the uniform sampling  is shown to be optimal under certain constraints. Whenever we find $\mathsf{D}_{\min}$ and $\mathsf{V}_{\min}$ explicitly, the  minima are achieved simultaneously at the same optimal sampling points.

We also consider extensions of the results, for example to discrete or sparse periodic signals.

\textbf{Proof technique:} Due to the assumption of Gaussian distribution on the signal coefficients $A_\ell$ and $B_\ell$, minimizing the MMSE of the signal reduces to minimizing the linear MMSE (LMMSE). LMMSE can be expressed as a linear algebra optimization problem over matrices. However, this is a non-linear optimization problem because the coordinates of the matrices include sine and cosine functions of the sampling locations $t_i$ (the variables we are optimizing over are sampling locations $t_{i}$). Sine and cosine functions are nonlinear, albeit structured, functions; their structure can be exploited to solve the optimization problem. A key tool that we repeatedly exploit is an inequality in \emph{majorization theory} that relates trace of a function of a matrix to the diagonal entries of the matrix (see \cite[Chapter 2]{Bhatia} for an overview of majorization theory).

\textbf{Related works:} As mentioned above, the signal model considered in previous works differs from our model and hence it is not possible to directly compare our results with the existing ones. Nonetheless, we remark on similarities and differences among  the conclusions of our work and the other papers. A number of papers consider the reconstruction distortion of a \emph{stationary process} from its noisy  samples  under the \underline{uniform sampling} strategy. Reconstruction of a \emph{lowpass} stationary process from its uniform samples was expressed in an early work \cite{Balakrishnan} in terms of  the samples of the auto-correlation function of the process. The same paper  assumes no  sampling noise, and shows that uniform sampling is sufficient above the Nyquist rate for perfect signal reconstruction. Assuming noisy uniform samples, optimal pre-sampling filter and the corresponding distortion has been found in  \cite{eldarq, Matthews, Chan} via different proof techniques.

Reconstruction distortion from noisy \underline{nonuniform samples} has also been considered in the literature. Authors in \cite{WuSun} consider a \emph{stationary Gaussian signal model} with auto-correlation function $R(\tau)=\rho^{|\tau|}$. They adopt a non-adaptive sampling strategy  and show that amongst all nonuniform sampling strategies, uniform sampling is optimal to minimize the average sampling distortion. This is in contrast with our results on the benefit of nonuniform sampling in the low sampling rate region. However, it relates to our result in the high sampling region where   uniform sampling strategy is optimal. Thus, optimality of uniform sampling strategy depends on the class of stochastic signals one considers. The authors in \cite{timestampless} assume a \emph{first order Markov} signal. They consider an \emph{adaptive sampling} method, wherein the location of the next sample is chosen based on the previous sampling locations and values. The paper employs dynamic programming and greedy techniques to solve this problem. This paper shows that adaptive nonuniform sampling outperforms uniform sampling. Finally, there are some papers such as \cite{app1}-\cite{app4} that consider specialized signal models (e.g. based on differential equations) to study sampling of a variable of interest in a given practical application. There are also less related works (such as \cite{Prakash, Davis, Gastpar}) that discuss the tradeoffs between sampling rate and reconstruction distortion in the context of quantization or compressed sensing.

\textbf{Organization of the Paper:} In Section \ref{sec:problemdef},  the problem is formally defined. Section \ref{sec:overviewmainproofs} provides an overview of the proof techniques used to prove the main results. In Section \ref{sec:main-results},  the main results of the paper are presented; the proofs are relegated to  Appendix \ref{proofs}. In Section \ref{sec:extensions1}, the extensions of our results to the  sparse and discrete signals are considered. Finally the proofs and technical details are given in the appendices.

\begin{table*}
\begin{center}
\begin{footnotesize}
\begin{tabular}{|c|c|}
\hline
Notation & Description \\
\hline
$T$, $f_0$, $\omega_0$ & $T$ is signal period, $f_0=1/T$ and $\omega_0=2\pi f_0$ \\
\hline
$A_\ell, B_\ell$& Fourier series coefficients of the original signal\\ &$\ell\in [N_1:N_2]$\\
\hline
$\mathsf{p}$ & Average signal power in each frequency\\
\hline
 & Support of the input signal in frequency
\\$N$, $N_1$ and $N_2$& domain is from $N_1\omega_0$ to $N_2\omega_0$. 
\\&$N=N_2-N_1+1$.  \\&$2N$ is the number of free variables \\
\hline
 $(N_2-N_1+1)f_0=Nf_0$ & Signal bandwidth \\
$2N_2f_0$ &Nyquist rate \\
\hline
$M$ & Number of noisy samples \\
$M/T=Mf_0$ & Samping rate\\
\hline
$\sigma^2$ & Variance of the sampling noise \\
\hline
$SNR$ & $N\mathsf{p}/\sigma^2$\\
\hline
$\dagger$ & Conjugate transpose \\
\hline
$\{t_1, t_2, \cdots, t_M\}$ & Sampling time instances \\
\hline
$H(\omega)$ & Pre-sampling filter \\
\hline
&  Original signal, the signal after passing the filter\\$S(t)$, $\tilde{S}(t)$ and $\hat{S}(t)$&and its reconstruction, respectively.\\
\hline
$\Delta_1$ & $\ps Q Q^\dagger+\sigma^2 I$\\
\hline
$\Delta_2$ & $\ps Q^\dagger Q+\sigma^2 I$ \\
\hline
\end{tabular}
\end{footnotesize}
\end{center}
\caption{Summary of notations.}
\label{table:summary}
\end{table*}

\section{Problem Definition}\label{sec:problemdef}
\begin{figure*}[t]
\centering
\includegraphics[scale=0.4,angle=0]{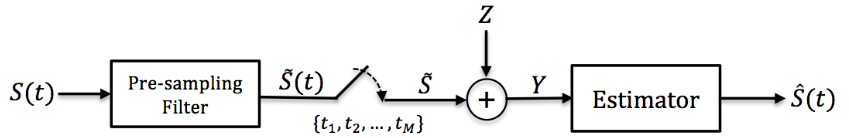}
\caption{\small{Sampling-Distortion Model. The signal $S(t)$ is passed through the pre-sampling filter. The output $\tilde{S}(t)$ is sampled at times $t_1, t_2, \cdots, t_M$. A noise $Z$ is added to the signal at the time of sampling. The noisy observations are then used to recover the signal.}}
\vspace{-0.2cm}
\label{Fig1}
\end{figure*}

We consider a continuous  bandlimited periodic signal defined as follows: \begin{align}S(t)=\sum_{\ell=N_1}^{N_2}[A_\ell\cos(\ell\omega_0 t)+B_\ell\sin(\ell\omega_0 t)],\qquad t\in[0,T]\label{eqdefs}
\end{align}
where $\omega_0=2\pi/{T}$ is the fundamental frequency.  The summation is from $\ell=N_1$ to $N_2$, indicating that the signal is bandlimited. We assume that $A_\ell$ and $B_\ell$ for $N_1\leq \ell\leq N_2$ are mutually independent Gaussian r.v.s  \footnote{
In this paper, we employ reconstruction and distortion formulas for linear MMSE. For Gaussian  distributions, MMSE and linear MMSE are identical.
}, distributed according to $\mathcal{N}(0,\ps)$ for some $\ps>0$.\footnote{
Observe that the term
$A_\ell\cos(\ell\omega_0 t)+B_\ell\sin(\ell\omega_0 t)$ can be expressed as 
$C_\ell \cos(\ell\omega_0 t+\phi_\ell)$,
where $C_\ell=\sqrt{A_\ell^2+B_\ell^2}$. Since  $A_\ell$ and $B_\ell$ are mutually independent Gaussian r.v.s with the same variance, $C_\ell$ has a Rayleigh distribution with zero mean, and $\phi_\ell$ has a uniform distribution. Furthermore, $C_\ell$ and $\phi_\ell$ are mutually independent. Hence, the signal is wide-sense stationary. Since the r.v.s are jointly Gaussian, it is  also strict-sense stationary. 
}
 Thus, the signal power is $N\ps$, where $N = N_2-N_1+1$.

Our model is depicted in Figure \ref{Fig1}. The signal ${S}(t)$, given in  \eqref{eqdefs}, is passed through a pre-sampling filter, $H(\omega)$, to produce $\tilde{S}(t)$. The signal $\tilde{S}(t)$ is sampled at time instances $t_1, t_2, \cdots, t_M$, where $t_i\in [0,T)$.  These samples are then corrupted by $Z(t)$, an i.i.d.\ zero-mean Gaussian noise $\mathcal{N}(0, \sigma^2)$. Thus, our observations are $Y_i=\tilde{S}(t_i)+Z_i, i = 1, 2, \cdots, M$.
An estimator uses the noisy samples $(Y_1,Y_2,\cdots, Y_M)$ to reconstruct the original signal, denoted by  $\hat{S}(t)$. The incurred sampling distortion given by MMSE criterion is equal to
$$\text{Sampling Distortion}=\frac{1}{T}\int_{t=0}^T|\hat{S}(t)-S(t)|^2dt.$$
This distortion is a random variable. Our goal is to compute the minima of the expected value and variance of this random variable, \emph{i.e.,}  to minimize
\begin{align}\mathsf{D}&=\mathbb{E}\left\{\frac{1}{T}\int_{t=0}^T|\hat{S}(t)-S(t)|^2dt\right\},
\\\mathsf{V}&=\mathsf{Var}\left\{\frac{1}{T}\int_{t=0}^T|\hat{S}(t)-S(t)|^2dt\right\}.\end{align}
We are free to choose $H(\omega)$ and sampling locations $t_i$, $ i\in\{1, 2, \cdots, M\}$. Therefore, the optimal average and variance of distortion are defined as follows:
\begin{align}\mathsf{D}_{\min}&=\min_{H(\cdot), t_1, t_2, t_3, \cdots, t_M}\mathbb{E}\left\{\frac{1}{T}\int_{t=0}^T|\hat{S}(t)-S(t)|^2dt\right\},\\
\mathsf{V}_{\min}&=\min_{H(\cdot), t_1, t_2, t_3, \cdots, t_M}\mathsf{Var}\left\{\frac{1}{T}\int_{t=0}^T|\hat{S}(t)-S(t)|^2dt\right\},
\end{align}
where the minimization is taken over the sampling locations and the  pre-sampling filter. 

We assume that $H(\omega)$ is a real LTI filter such that 
\begin{align}
|H(\ell\omega_0)|^2\leq 1 ~~~  for ~all ~~~N_1\leq \ell\leq N_2,\label{al}
\end{align} 
meaning that the frequency gain of the filter is at most one; in other words, we assume that the filter is passive and hence cannot increase the signal energy in each frequency. In particular, all-pass filters  $|H(\omega)|=1$ satisfy this assumption.\footnote{The reason for introducing this assumption is that we can always normalize the power gain of the filter by scaling the sampling noise power. More specifically, replacing some given $H(\omega)$ with $\alpha H(\omega)$ for some $\alpha$, changes $\tilde{S}(t_i)$ to $\alpha \tilde{S}(t_i)$. Hence, $Y_i=\tilde{S}(t_i)+Z_i$ would change to $Y_i=\alpha\tilde{S}(t_i)+Z_i$. But reconstruction from $Y_i, (i=1, 2, \dots, M)$ yields the same distortion as the  reconstruction from $Y_i/\alpha$,  which is  equivalent to dividing the noise power by $\alpha$.}

\section{
Overview of Proof Techniques}\label{sec:overviewmainproofs}
In this section, we provide some  intuitions about the proofs of our main results; the formal proofs  are given in the  appendices. To reconstruct the original signal $S(t)$ given in \eqref{eqdefs}, one should estimate  its Fourier coefficients  $A_\ell$ and $B_\ell$ ($\ell = N_1,N_1+1, \cdots, N_2$).
Let $\mathbf{X}$ be the vector of coefficients $A_\ell$ and $B_\ell$ of the  signal $S(t)$ , \emph{i.e.,}  
\begin{align}\mathbf{X}=[A_{N_1}, \cdots, A_{N_2}, B_{N_1}, \cdots, B_{N_2}]^\dagger,\label{eqn:def:coef:ABprime}\end{align}
where $\dagger$ is the transpose operator. 

The signal ${S}(t)$ is passed through an LTI filter and then sampled at time instances $t_i$ for $i = 1,2 \cdots, M$. Since all the operations are linear, the vector of  $M$ observations can be written as $$\mathbf{Y}=QL\mathbf{X}+\mathbf{Z},$$
where $\mathbf{Z}$ is the noise vector, $L$ is a matrix which represents the LTI filter, and $Q$ is a matrix with sine and cosine coordinates as follows (a formal derivation of the formula is given in Section \ref{sec:matrix-form}):
\footnotesize
$$Q=\begin{pmatrix}\cos(N_1\omega_0 t_1) & \cos((N_1+1)\omega_0 t_1) &\cdots&\cos(N_2\omega_0 t_1)
&\sin(N_1\omega_0 t_1) & \sin((N_1+1)\omega_0 t_1) &\cdots&\sin(N_2\omega_0 t_1)
\\
\cos(N_1\omega_0 t_2) & \cos((N_1+1)\omega_0 t_2) &\cdots&\cos(N_2\omega_0 t_2)
&\sin(N_1\omega_0 t_2) & \sin((N_1+1)\omega_0 t_2) &\cdots&\sin(N_2\omega_0 t_2)\\
&&&\vdots&\vdots&&&\\
\cos(N_1\omega_0 t_M) & \cos((N_1+1)\omega_0 t_M) &\cdots&\cos(N_2\omega_0 t_M)
&\sin(N_1\omega_0 t_M) & \sin((N_1+1)\omega_0 t_M) &\cdots&\sin(N_2\omega_0 t_M)
\end{pmatrix}.$$\normalsize
Estimation of $S(t)$ is equivalent to the estimation of its Fourier coefficients $A_\ell$ and $B_\ell$. If we denote the coefficients of the reconstruction signal by 
$$\hat{\mathbf{X}}=[\hat{A}_{N_1}, \cdots, \hat{A}_{N_2}, \hat{B}_{N_1}, \cdots, \hat{B}_{N_2}]^\dagger,$$ using the Parseval's theorem, we will have
\begin{align}\frac{1}{T}\int_{t=0}^T|\hat{S}(t)-S(t)|^2dt &= \frac12 \sum_{\ell=N_1}^{N_2}\left(|\hat{B}_\ell-B_\ell|^2+|\hat{A}_\ell-A_\ell|^2\right)=\frac 12\|\mathbf{X}-\hat{\mathbf{X}}\|^2\label{eqn:Parsevalprime}.\end{align}

Our goal is to use $\mathbf{Y}$ to estimate the signal with minimal distortion, \emph{i.e.,}  from  
\eqref{eqn:Parsevalprime}, we would like to minimize 
\begin{align}\mathsf{D}=\frac{1}{T}\int_{t=0}^T\mathbb{E}\{|\hat{S}(t)-S(t)|^2\}dt=\frac 12 \mathbb{E}\|\mathbf{X}-\hat{\mathbf{X}}\|^2.\label{eqn:Dvalue1prime}\end{align}
 Since all the random variables are Gaussian, the linear MMSE is optimal and thus we would like to use $QL\mathbf{X}+\mathbf{Z}$ to find $\hat{\mathbf{X}}$ such that $E\|\mathbf{X}-\hat{\mathbf{X}}\|^2$ is minimized. There are two  ways to express the mean square error; see \cite[p.~90]{Luenberger}. These two formulas reduce to the following (as formally shown in  Section \ref{sec:matrix-form}):
\begin{align}
\mathbb{E}\| \mathbf{X}-\hat{\mathbf{X}} \|^2=&(2N-M)\ps +\ps\sigma^2\tr[(\ps QLL^\dagger Q^\dagger + \sigma^2 I)^{-1}]\label{eq11}
\\=&\ps\sigma^2 \tr[(\ps L^\dagger Q^\dagger QL + \sigma^2 I)^{-1}].\label{eq12}
\end{align}

We first minimize the above expressions over the matrix $L$ for a given matrix $Q$. Next, we minimize over the matrix $Q$. The  difficulty in the second step is that we need to minimize the trace of a function of the matrix $Q$ which has coordinates  that depend on the sampling times $t_i$ in a non-linear way.

\emph{Minimizing over the matrix $L$:} The coordinates of  matrix $L$ are determined by the LTI filter. It is shown that the passive filter assumption reduces to the constraint $LL^\dagger \leq I$, where by $A\leq B$ we mean that $B-A$ is positive semi-definite. Hence, $Q(LL^\dagger-I) Q^\dagger \leq 0$, implying that
\begin{align}\ps QLL^\dagger Q^\dagger + \sigma^2 I\leq \ps QQ^\dagger+ \sigma^2 I.\label{eqn:ordernewsq}\end{align}
Furthermore, we know that for any two symmetric positive definite matrices $A$ and $B$,  the relation $A \leq  B$ implies that $B^{-1} \leq  A^{-1}$; this is because the function  $f(t)=-t^{-1}$  is operator monotone \cite[Lemma 2.7]{Eric}. From this fact, we obtain
\begin{align}\tr (\ps QLL^\dagger Q^\dagger + \sigma^2 I)^{-1} \geq \tr (\ps QQ^\dagger + \sigma^2 I)^{-1},\end{align}
meaning that $L=I$ is an optimal choice. Then, from the fact that not using any filter on the signal bandwidth is equivalent with $L=I$, we conclude that pre-filtering is not helpful in reducing the MMSE.

\emph{Minimizing over the matrix $Q$:} When we do not use a pre-sampling filter, the two  MMSE formulas given in \eqref{eq11} and \eqref{eq12} reduce to 
\begin{align}
\mathbb{E}\| \mathbf{X}-\hat{\mathbf{X}} \|^2=&(2N-M)\ps +\ps\sigma^2\tr[(\ps QQ^\dagger + \sigma^2 I)^{-1}]\label{eqn:mmQ1}
\\=&\ps\sigma^2 \tr[(\ps Q^\dagger Q + \sigma^2 I)^{-1}].\label{eqn:mmQ2}
\end{align}
Next, we use the following inequality from majorization theory: Let $\Delta$ be a  real symmetric matrix, and $\Delta_{diag}$ be a diagonal matrix constructed by keeping the diagonal entries of $\Delta$ and setting the off-diagonal entries to be zero, \emph{i.e.,} the $(i,j)$th entry of $\Delta_{\diag}$ is zero if $i\neq j$, and the $(i,i)$th entry of $\Delta_{\diag}$ is the $(i,i)$th entry of $\Delta$. Then we have the following inequality:
$$\tr(\Delta^{-1})\geq \tr(\Delta_{\diag}^{-1}).$$
Alternatively, because $\Delta_{\diag}$ is a diagonal matrix, the inequality says that
\begin{align}\tr(\Delta^{-1})\geq \sum_{i}\frac{1}{\Delta_{(i,i)}}.\label{eqn:traceinequalitya1}\end{align}
Furthermore, the above inequality is strict if $\Delta$ has at least one non-zero off-diagonal entry. In other words, equality holds if and only if $\Delta$ is a diagonal matrix ($\Delta_{\diag}=\Delta$).
In fact, more generally from Exercise II.1.12 and Theorem II.3.1  of \cite{Bhatia}, we have the following theorem:
\vspace{-0.3cm}
\begin{bbox}
\vspace{-0.3cm}
\begin{theorem}\label{thm:lemma:diag}
For any convex function $f(\cdot)$ and any real symmetric matrix $\Delta$, we have
$$\tr(f(\Delta))\geq \tr(f(\Delta_{\diag}))=\sum_{i}f(\Delta_{(i,i)}),$$
where a function of a matrix, $f(\Delta)$, is defined as follows: for a diagonal matrix $\Lambda$, $f(\Lambda)$ is defined by applying $f(\cdot)$ to all the diagonal entries. For a  symmetric real matrix $\Delta$, $f(\Delta)$ is defined as  $Pf(\Lambda)P^{-1}$ if $\Delta=P\Lambda P^{-1}$ for some diagonal matrix $\Lambda$. 

If $f(\cdot)$ is strictly convex, \emph{i.e.,} $f''(x)>0$, then equality in the above relation holds if and only if $\Delta$ is a diagonal matrix.
\end{theorem}
\vspace{-0.3cm}
\end{bbox}
\vspace{-0.3cm}
Now, by setting $\Delta_1=\ps QQ^\dagger + \sigma^2 I$ and using the above trace inequality \eqref{eqn:traceinequalitya1} in equation \eqref{eqn:mmQ1}, we get a lower bound on the reconstruction distortion:
\begin{align}
\mathbb{E}\| \mathbf{X}-\hat{\mathbf{X}} \|^2=&(2N-M)\ps +\ps\sigma^2\tr\left(\Delta_1^{-1}\right)\label{eqn:touselateram1}
\\\geq& (2N-M)\ps +\ps\sigma^2\sum_{i}\left(\Delta_{1(i,i)}\right)^{-1}.
\end{align}
The above formula leads to what we call the \emph{Lower bound 1} on distortion. 
Similarly, by setting $\Delta_2=\ps Q^\dagger Q + \sigma^2 I$ and using the above trace inequality \eqref{eqn:traceinequalitya1} in equation \eqref{eqn:mmQ2}, we get another lower bound on the reconstruction distortion. This leads to what we call the  \emph{Lower bound 2} on distortion. 
These two  general lower bounds are formally stated as Lemma \ref{lemma:lower:1}  and Lemma \ref{lemma:lower:2} later. 

The next question is whether any of these two lower bounds can be tight. The trace inequality \eqref{eqn:traceinequalitya1} is tight if and only if $\Delta$ is diagonal. Then, the question is whether we can choose the values of sampling times $t_i$ such that either $\Delta_1=\ps QQ^\dagger + \sigma^2 I$ or $\Delta_2=\ps Q^\dagger Q + \sigma^2 I$ become diagonal matrices. Each off-diagonal entry of $\Delta_1$ or $\Delta_2$ is a function of the sampling times $t_i$. Setting an off-diagonal entry to zero imposes an equation on $t_i$. The question is whether we can simultaneously solve all of these (non-linear) equations corresponding to all the off-diagonal entries. It turns out that if the number of samples $M$ is small (less than or equal to $N$), then we appropriately choose $t_i$ (in a nonuniform manner in the $[0,T]$ interval) such that $\Delta_1$ becomes diagonal (Theorem \ref{T1a}). If $M$ is large (greater than $2N$), then under some constraints we can make $\Delta_2$ diagonal; when $M$ is very large (greater than $2N_2$), the uniform sampling strategy always makes $\Delta_2$ diagonal (Theorem \ref{T1c}). Unfortunately, when $M$ is in the intermediate range, between $N$ and $2N$, we cannot make either $\Delta_1$ or $\Delta_2$  become diagonal with appropriate choice of $t_i$. Nonetheless, backed by numerical simulations, one possible strategy is to choose  the $t_i$ in such a way  that forces a good fraction (if not all) of the entries of $\Delta_1$ to become zero. While it is not necessarily true that this  approach for choosing $t_i$ is optimal, nevertheless the reconstruction distortion incurred by it provides an upper bound on the optimal reconstruction distortion (Theorem 5). To find an \emph{explicit} analytical expression for the upper bound, we use some further tools from linear algebra to be  discussed in the Appendix \ref{sec:B81}.

\section{Main Results} \label{sec:main-results}
In this section, we first claim that no pre-sampling filter is necessary to achieve the optimal distortion, and then state our  results on the optimal choices of  the  sampling locations.
Pre-sampling filter can be potentially useful because it allows one to remove some of the frequency components at the beginning, and thereby enable the sampler to recover the remaining components with a lower estimation error. The total error will be equal to the variance  of the filtered components, plus the estimation error of the unfiltered components. It turns out that such a strategy cannot outperform the naive strategy of not using any filter at all. Our first result states this fact:
\begin{theorem}\label{thm:pre-sampling} 
Consider  the class of passive pre-sampling filters, \emph{i.e.,} real filters $H(\omega)$ satisfying
\begin{align}
|H(\ell\omega_0)|^2\leq 1 ~~~  for ~all ~~~N_1\leq \ell\leq N_2.
\end{align}
Then, the  average and variance of the distortion  are minimized by the identity filter $H(\omega)=1$ for all $\omega$ (which is equivalent with not using any pre-sampling filter).
\end{theorem}
The above theorem implies that not using any pre-sampling filter is optimal. In particular, a pre-sampling filter that reduces the signal bandwidth to \emph{half} the sampling rate (an anti-aliasing filter) is suboptimal. In  Section \ref{sec:filter-Half}, we discuss this suboptimality in more details. Besides this section, In the subsequent parts of the paper we mainly avoid discussion of pre-sampling filter.

In the following, we state two general lower bounds on the average and variance of  distortion and then, using these lower bounds, we provide  the main results of the paper. Note that $SNR = N\ps /\sigma^2$, and $N=N_2-N_1+1$ which  is proportional to the bandwidth of the signal (which is $Nf_0$). Remember that $M$ is the number of noisy samples  which is proportional to the sampling rate (defined as $Mf_0$). The  proofs of the main results  are given  in Appendix \ref{proofs}. We did not provide the formal proofs below the statement of the theorems because we first need to represent the problem in a matrix form, as given in Appendix \ref{sec:matrix-form}. 

\subsection{Lower bound 1 on distortion}\label{sec:lemma:1:lower:ab}
As we mentioned in Section \ref{sec:overviewmainproofs}, using the two forms of the MMSE and Theorem \ref{thm:lemma:diag}, one can obtain lower bounds on the optimal average and variance of the reconstruction distortion. We begin by providing the first of these two lower bounds:

\begin{lemma}(Lower bound 1)\label{lemma:lower:1} 
For any noise variance $\sigma>0$, the following lower bounds hold on the minima of the average and variance of the reconstruction distortion for any values of $M$, $N$ and $SNR$:
\begin{align*}\frac{\mathsf{D}_{\min}}{\ps}&\geq\frac 12 \left(2N-M+\frac{M}{1+SNR}\right),\\ \frac{\mathsf{V}_{\min}}{\ps^2}&\geq 2N-M+\frac{M}{(1+SNR)^2}.
\end{align*}
The above inequalities are tight for some values of $M$ and $N$. The condition for equality in the above equations is the possibility of choosing the sampling times $\{t_1, t_2, \dots, t_M\}$ in such a way that 
$t_i\neq t_j$ for $i\neq j$,  and moreover for any $i\neq j$ time instances $t_i$ and $t_j$ satisfy one of the following two equations:
\begin{align}|t_i - t_j|&=T\frac {m_1}{N},& ~ \text{for some natural number } m_1,\label{cond-1-lemma1}\\ 
|t_i - t_j|&=T\frac{2m_2-1}{2(N_1+N_2)},&  \text {for some natural number }  m_2. \label{cond-2-lemma1}
\end{align}
\end{lemma}

 Observe that when $M\leq N$, if we choose $t_i=i(T/N)$ for $i=1,2,\cdots, M$, then $|t_i-t_j|=|i-j|T/N$ is an integer multiple of $T/N$. Hence, the above condition can be satisfied for any $M\leq N$. On the other hand, this condition cannot hold if $M>2(N_1+N_2)$. This is because any set of distinct points $t_1, t_2, \dots, t_{M}\in[0,T]$ will have two points whose difference $|t_i-t_j|$ is less than or equal to $T/M$. Since $M>2(N_1+N_2)$, $|t_i-t_j|$ cannot be an integer multiple of $T/N$ or $T/(2(N_1+N_2))$.

See Appendix \ref{Lower1proof} for a proof. Roughly speaking, the proof uses 
Theorem \ref{thm:lemma:diag}  as discussed in Section \ref{sec:overviewmainproofs} for the matrix $\Delta_1=\ps QQ^\dagger + \sigma^2 I$. It  explicitly works out the condition on sampling times $t_i$ that makes the matrix $\Delta_1$  diagonal.

The above observation leads to the following theorem for $M\leq N$:

\begin{theorem}\label{T1a} 
For $M\leq N$, the optimal average and variance of distortion are 
\begin{align}\frac{\mathsf{D}_{\min}}{\ps}&=\frac 12 \left(2N-M+\frac{M}{1+SNR}\right),\label{eqn:anewT1a}\\
\frac{\mathsf{V}_{\min}}{\ps^2}&=2N-M  + \frac{M}{(1 +SNR)^2}.\label{eqn:anewT1aVar}\end{align}
Both the 
minimal average and variance of the distortion are obtained by choosing $M$ distinct time instances arbitrarily from the following set of  $N$ samples \footnote{ Observe that these points are uniform sampling points with increment $T/N$; they are not uniform sampling points corresponding to $M$ samples, unless $M=N$. Uniform sampling for $M$ points takes the samples at times $\{0, \frac{1}{M}T, \frac{2}{M}T, \frac{3}{M}T, \cdots, \frac{M-1}{M}T\}.$} 
$$\{0, \frac{1}{N}T, \frac{2}{N}T, \frac{3}{N}T, \cdots, \frac{N-1}{N}T\}.$$
 The optimal interpolation formula for this set of sampling points is given by
\begin{align}\hat{S}(t)&=\frac {\ps}{N\ps+\sigma^2} \sum_{i=1}^{M} \sum_{\ell=N_1}^{N_2}\cos(\ell \omega_0 (t-t_i))Y_i, \nonumber
\\&=\frac {\ps}{N\ps+\sigma^2} \sum_{i=1}^{M} \cos(\frac{\omega_0}{2} (t-t_i)) \cdot \frac {\sin(\frac{N\omega_0}{2} (t-t_i))}{\sin(\frac{\omega_0}{2} (t-t_i))} \cdot Y_i,\label{eqn:Interpolation1}\end{align}
where $Y_i$ is the noisy sample of the signal at $t=t_i$.
\end{theorem}

See Appendix \ref{proof:4:2:1} for a proof. To understand the meaning of  \eqref{eqn:anewT1a}, observe that for fixed values of $\ps$ and $\sigma$, the minimum distortion is linear in $M$. But  it is not linear in $N$ (as SNR in the denominator is $N\ps/\sigma^2$), except when $SNR$ goes to infinity. In the case of noiseless samples, $\sigma=0$, the SNR will be infinity and the minimal distortion will be $$\frac{\ps}{2}(2N-M).$$ To intuitively understand this equation, observe that there are $2N$ free variables and we can recover $M$ of them using the samples. Therefore, we will have $2N-M$ free variables, giving a total distortion of $(2N-M)\ps/2$ as the power of each sinusoidal function is $\ps/2$. Moreover, the maximum distortion is $2N(\ps/2)$, which is obtained when 
$\sigma=\infty$  ($SNR=0$) or $M=0$. 
Finally, observe that when $SNR$ is large, 
\begin{align*}\mathsf{D}_{\min}&\approx\frac {\ps}{2} \left(2N-M+\frac{M}{SNR}\right)
=\frac {\ps}{2}(2N-M) +\frac{M\sigma^2}{2N}\end{align*} 
which is linear in both $\ps$ and $\sigma^2$. Observe that $\mathsf{D}_{\min}$ is increasing in both $\ps$ and $\sigma^2$, since $2N-M\geq 0$ for $M\leq N$. Furthermore, $(2N-M)/2\geq M/(2N)$ when $M\leq N$, implying that the coefficient of $\ps$ is greater than or equal to that of $\sigma^2$.  Figure \ref{Fig3} depicts $\mathsf{D}_{\min}$ as a function of $\ps$, $\sigma^2$, $M$ and $N$.

\begin{figure*}
\centering
\includegraphics[scale=0.45,angle = 0]{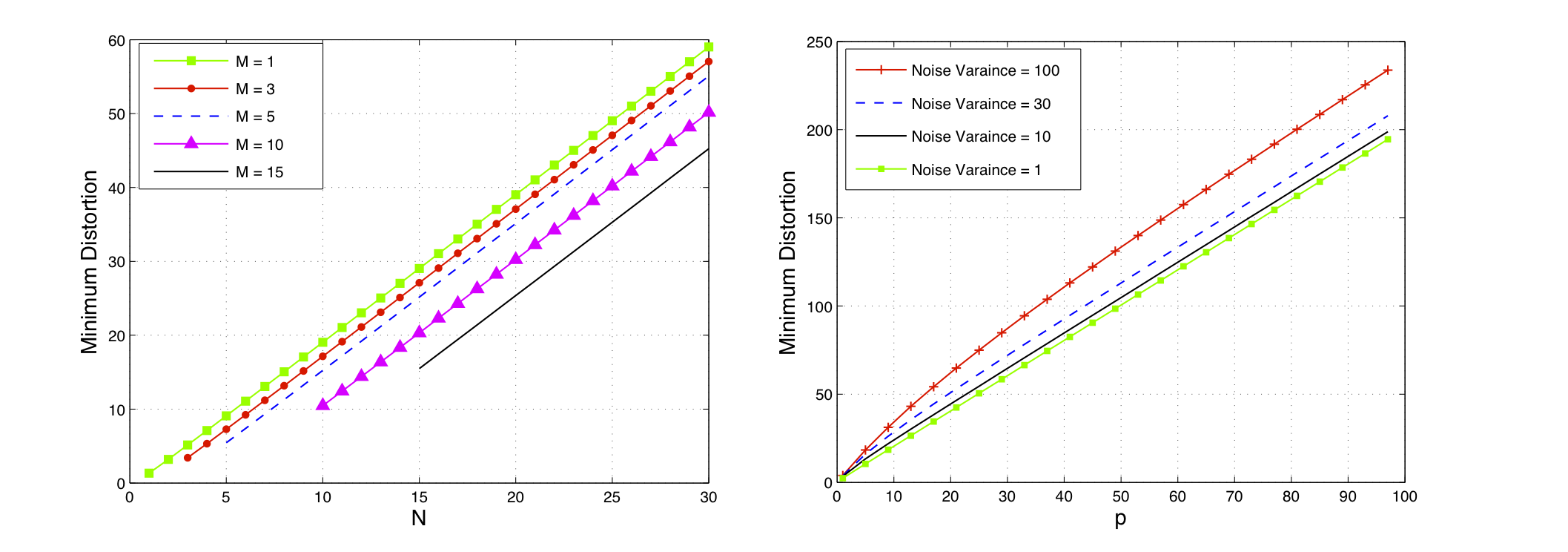}
\caption{{  In the left subfigure, the minimum average distortion is illustrated as a function of   $M$ for different values of $M\leq N$.  In the right subfigure,  the minimum average distortion is depicted as a function of  $\ps$ for different values of the noise variance $\sigma^2$.}}
\label{Fig3}
\end{figure*}

In the statement of Theorem \ref{T1a}, one possible optimal sampling set was given. 
Next, we consider the question of whether other optimal sampling sets exist for $M\leq N$:
\begin{proposition}\label{thm:unique}  If $M\leq N$, and $N$ does not divide $2N_1-1$, then a given set of sampling times $(t_1, t_2, \cdots, t_M)$ is optimal \emph{only if} there exists some constant $\tau<T/N$ such that for at least $M-1$ of the sampling points $t_i$ we have
$$t_i\in \{\tau, \tau+\frac{1}{N}T, \tau+\frac{2}{N}T, \tau+\frac{3}{N}T, \cdots, \tau+\frac{N-1}{N}T\}.$$
\end{proposition}
See Appendix \ref{proof:4:2:3} for a proof.

\subsection{Lower bound 2 on distortion}\label{sec:lemma2:section:abs}

In the next lemma, alternative lower bounds on $\mathsf{D}_{\min} $ and $\mathsf{V}_{\min}$  are given, which are tighter than the ones given in Lemma \ref{lemma:lower:1}  for $M > 2N$.

\begin{lemma}(Lower bound 2)\label{lemma:lower:2}
 For any $\sigma>0$, the following lower bounds hold for all values of $M, N$ and $SNR$:
\begin{align} 
\mathsf{D}_{\min} &\geq \frac{N\ps }{1+\frac {M}{2N} SNR },
\\ \mathsf{V}_{\min}&\geq\frac{2N\ps^2 }{(1+\frac {M}{2N} SNR)^2 }.
\end{align}

The above inequalities are tight for some values of $M$ and $N$. The condition for equality in the above equations is the possibility of choosing distinct sampling times that satisfy the following equation:
\begin{align}
\sum_{i=1}^{M} e^{j2 \pi k\frac{t_i}{T}} = 0 \qquad \emph{for} \quad k\in\{1, 2, \cdots, N-1\}\cup\{2N_1, 2N_1+1, \cdots, 2N_2\}.
\label{eqn:for:thm:unique3inlemma2}
\end{align}
\end{lemma}

See Appendix \ref{Lower2proof} for a proof.  Roughly speaking, the proof uses 
Theorem \ref{thm:lemma:diag}  as discussed in Section \ref{sec:overviewmainproofs} for the matrix $\Delta_2=\ps Q^\dagger Q + \sigma^2 I$. The sampling times deriven from  the above condition diagonalize tha matrix $\Delta_2$. As an example, consider $M>2N_2$ and uniform samples $t_i=(i-1)(T/M)$ for $i=1,\dots, M$. Then, using geometric series calculus, we have 
$$
\sum_{i=1}^{M} e^{j2 \pi k\frac{i-1}{M}} = \frac{e^{j2 \pi k\frac{M}{M}}-1}{e^{j2 \pi k\frac{1}{M}}-1}=0$$
for any natural number $1\leq k\leq M-1$. This shows that the bounds given above are tight via uniform sampling if $M>2N_2$. 

The above observation leads to the following theorem for $M\geq 2N$:

\begin{theorem}\label{T1c} 
The following lower bounds on optimal average and variance of distortion hold:
\begin{align}\frac{\mathsf{D}_{\min}}{\ps} &\geq \frac{N}{1+\frac{M}{2N}SNR},\label{eqn:25thm71}
\\ \frac{\mathsf{V}_{\min}}{\ps^2}&\geq\frac{2N }{(1+\frac {M}{2N} SNR)^2 }.\label{eqn:25thm72}\end{align}
Furthermore, for $M\geq 2N$, uniform sampling, \emph{i.e.,}  $t_i=iT/{M}$ ($ i=1, 2, \cdots, M$) is an optimal sampling strategy if no multiple of $M$ can be found in the set $\{2N_1, 2N_1+1, \cdots, 2N_2\}$. In particular, uniform sampling is optimal when $M>2N_2$.
Also the reconstruction formula for the optimal set of sampling points is given by 
\begin{align}\hat{S}(t)&=\frac {\ps}{\frac{M}{2}\ps+\sigma^2} \sum_{i=1}^{M} \sum_{\ell=N_1}^{N_2}\cos(\ell \omega_0 (t-t_i))Y_i, \nonumber
\\&=\frac {\ps}{N\ps+\sigma^2} \sum_{i=1}^{M} \cos(\frac{\omega_0}{2} (t-t_i)) \cdot \frac {\sin(\frac{N\omega_0}{2} (t-t_i))}{\sin(\frac{\omega_0}{2} (t-t_i))} \cdot Y_i,\label{eqn:Interpolation2}\end{align}
where $Y_i$ is the noisy sample of the signal at $t=t_i$.
\end{theorem}
See Appendix \ref{proof:4:4:1} for a proof. The condition given for the optimality of uniform sampling comes from \eqref{eqn:for:thm:unique3inlemma2} in the statement of Lemma 
\ref{lemma:lower:2}.
To intuitively understand the statement of the theorem, observe that when $M$ goes to infinity, the minimum distortion $\mathsf{D}_{\min}$ goes to zero with order $\bigO{1/M}$, regardless of the value of sampling noise. This is because when $M$ is very large ($M>2N_2$), the lower bounds given in \eqref{eqn:25thm71} and \eqref{eqn:25thm72} are tight. An application for sampling  above the Nyquist rate ($M>2N_2$) is bandpass sampling in the intermediate frequency (IF). 

Unlike the case of $M\leq N$, where the optimal sampling points could be found without any need to know $N_1$, observe that the constraint that no multiple of $M$ can be found in the set $\{2N_1, 2N_1+1, \cdots, 2N_2\}$ depends on $N_1$. In fact, the uniform sampling strategy, $t_i=iT/M$, is not in general optimal. For instance, uniform sampling is not an optimal strategy if there is some natural number $k\in\{1, 2, \cdots, N-1\}\cup\{2N_1, 2N_1+1, \cdots, 2N_2\}$ that is a multiple of $M$, since in this case  \eqref{eqn:for:thm:unique3inlemma2} becomes
$\sum_{i=1}^{M} e^{j2 \pi k\frac{t_i}{T}}=\sum_{i=1}^{M} e^{j2 \pi k\frac{i}{M}}=M \neq 0.$
In fact, non-uniform sampling may be optimal in some cases. Below are some examples of non-uniform sampling sets that are the  solutions of   \eqref{eqn:for:thm:unique3inlemma2}:

\begin{example}\emph{ When $M=2$, $N=1$ and $N_1=1$, sampling points $t_1=0, t_2=T/4$ are optimal.} \end{example}
\begin{example} \emph{Given any $N_1$ and $N_2$, let $\mathcal{K}=\{1, 2, \cdots, N-1\}\cup\{2N_1, 2N_1+1, \cdots, 2N_2\}$. Let us show $\mathcal{K}$ by $\mathcal{K}=\{k_1, \cdots, k_{|\mathcal{K}|}\}$. Then the following $M=2^{|\mathcal{K}|}$ non-uniform sampling points are optimal for $i\in [0:M-1]$. let
$$t_i=\sum_{r=1}^{|\mathcal{K}|} b_r\frac{T}{2k_r},$$
where $(b_1, \cdots, b_{|\mathcal{K}|})\in\{0,1\}^{|\mathcal{K}|}$ is the binary expansion of time index $i\in[0:M-1]$. We verify that \eqref{eqn:for:thm:unique3inlemma2} holds:
\begin{align}\sum_{i} e^{j2 \pi k\frac{t_i}{T}}&=\sum_{b_1, \cdots, b_{|\mathcal{K}|}}e^{j2 \pi k\sum_{r=1}^{|\mathcal{K}|} b_r\frac{1}{2k_r}}
\\&=\prod_{r=1}^{|\mathcal{K}|}\left(\sum_{b_r}e^{j\pi b_r\frac{k}{k_r}}\right)
\\&=\prod_{r=1}^{|\mathcal{K}|}\left(1+e^{j\pi\frac{k}{k_r}}\right),
\end{align}
which is zero if $k\in\mathcal{K}$, \emph{i.e.,}  $k=k_r$ for some $r$.}
\end{example}

\subsection{Sampling Distortion Tradeoffs for $N< M\leq 2N$}\label{sec:4:3}

Given an  arbitrary $M\leq 2N$, Lower Bound 1 (Lemma \ref{lemma:lower:1})   is tight when $N$ divides $2N_1-1$, \emph{i.e.,} $N|2N_1-1$. This is because  $N|2N_1-1$ implies that $N$ divides $2N_1-1+N=N_1+N_2$, and using conditions
\eqref{cond-1-lemma1} and \eqref{cond-2-lemma1} from the statement of Lemma \ref{lemma:lower:1},  any subset of size $M$ of 
\begin{align*}\bigg\{0&, \frac{1}{N}T, \frac{2}{N}T, \cdots, \frac{N-1}{N}T, \\&
\frac{T}{2(N_1+N_2)}, \frac{1}{N}T+\frac{T}{2(N_1+N_2)}, \frac{2}{N}T+\frac{T}{2(N_1+N_2)}, \cdots, \frac{N-1}{N}T+\frac{T}{2(N_1+N_2)}\bigg\},\end{align*}
is an optimal sampling set and thus Lower Bound 1 holds with equality.

However, when $N$ does not divide $2N_1-1$, one cannot choose the sampling times so that the equality conditions given in the two general lower bounds are met for $N<M\leq 2N$. Our general idea, as discussed in Section \ref{sec:overviewmainproofs}, is to choose the sampling times so that either $\Delta_1=\ps QQ^\dagger + \sigma^2 I$ or $\Delta_2=\ps Q^\dagger Q + \sigma^2 I$ become diagonal. In general, this is not possible  when $N<M\leq 2N$. We observed in numerical simulations that the optimal sampling set makes many off-diagonal entries of $\Delta_1$ zero. Therefore, our idea is to look for some sampling strategies that make $\Delta_1$ as diagonal as possible, and then attempt to find \emph{explicit expressions} for the resulting distortion using tools from linear algebra. Any particular sampling strategy would give us an upper bound on $\mathsf{D}_{\min}$. In our sampling strategy, we take an optimal sampling strategy for $M=N$, and append it by adding $M-N$ extra sampling times as follows:
 \begin{align}\{t_1,t_2,\dots, t_M\}=\{0, \frac{1}{N}T, \frac{2}{N}T, \frac{3}{N}T, \cdots, \frac{N-1}{N}T,~~\frac{1}{2N}T, \frac{3}{2N}T,\cdots, \frac{2M-2N-1}{2N}T\}.\label{samplingpointsupper}\end{align}
This leads to the following upper bound on the average the distortion.

\begin{theorem}(Upper bound)\label{T:NM2N} 
For $N< M \leq 2N$, the optimal average distortion can be bounded as follows
\begin{align}\frac{\mathsf{D}_{\min}}{\ps} \leq&\frac 12 (2N-M)+ \frac{2N-M}{2(1+SNR)}+\mathsf{Num}\cdot\frac{ 1+SNR}{1+2~ SNR }+(M-N-\mathsf{Num}) \cdot\frac{1}{ 1+SNR} ,\label{eqn:17s}
\end{align}
  in which $$\mathsf{Num}=\min\big(f(N_1,N), M-N\big)$$
and $f(a,b)$ is equal to
$$f (a,b)= \begin{cases}
   b-1      & \text{if}  ~ r = 0\\
   2b-2r+1       & \text{if } 2r >b \\
   2r-1  & \text{if } 0<2r \leq b 
  \end{cases}, $$
where $r$ is the remainder of dividing $a$ by $b$.
\end{theorem}

See Appendix \ref{proof:4:3} for a proof.  Here we only state some remarks about the proof.  The set of sampling points given in \eqref{samplingpointsupper} has a simple structure.
 As it becomes clear from the proof, this makes $\Delta_1=\ps QQ^\dagger + \sigma^2 I$ having the following block form ($c$ is a constant):
$\Delta_1=c\begin{bmatrix} I&\Gamma\\ \Gamma^{\dagger}&I\end{bmatrix}$
with two identity blocks appearing on the diagonal. Thus, while this choice cannot make $\Delta_1$ diagonal (the off-diagonal block $\Gamma$ has non-zero entries), but it produces two identity matrix on the main block. The distortion depends on the inverse of $\Delta_1$ (see \eqref{eqn:touselateram1}). While it is possible to relate eigenvalues of $\Delta_1$ to singular values of $\Gamma$ via the Jordan--Wielandt theorem (Theorem \ref{L3}),  it is not easy to find an explicit expression for the singular values of $\Gamma$. This is due to the fact that the entries of $\Gamma$ are summations of some sine and cosine terms and do not have a clean expression. Our idea is to expand the matrix $\Gamma$ by adding new rows or columns, such that the singular values of the extended matrix can be explicitly calculated. Next, we use results from linear algebra that relate singular values of a matrix to the singular values of its submatrices. While expression of the upper bound on the distortion looks complicated (due to the fact that various linear algebra tools are invoked), but we emphasize that the given expression is nonetheless \emph{explicit} and can be used for analytical considerations

\subsection{Plots of the lower and upper bounds}
In this section, we provide a number of plots that illustrate the tradeoffs between the number of samples ($M$), the signal bandwidth $(Nf_0)$, the variance of noise ($\sigma^2$) and the optimal expected value and variance of distortion ($\mathsf{D}_{\min}$ and $\mathsf{V}_{\min}$).

In Fig.~\ref{Lower12}, the two lower bounds on the optimal average distortion $\mathsf{D}_{\min}$ derived in Lemmas \ref{lemma:lower:1} and \ref{lemma:lower:2} are plotted as a function of number of samples $M$, where we have fixed all the other parameters such as $N$ and $\sigma^2$. As we see in this figure, Lower bound 1 is larger than Lower bound 2 when $M$ is small. This is consistent with the fact that Lower Bound 1 is tight and equal to $\mathsf{D}_{\min}$ for $M\leq N$, as shown in Theorem \ref{T1a}; and Lower Bound 2 is tight when $M$ is large, for $M>2N_2$, as shown in Theorem \ref{T1c}. 


The maximum of the two lower bounds (Lower Bound 1 and Lower Bound 2) is also a lower bound on the optimal average distortion $\mathsf{D}_{\min}$. This is plotted with blue stretched-line in  Fig.~\ref{lowersupper2}. This curve is matches $\mathsf{D}_{\min}$ for $M\leq N=9$ and $M> 2N_2=34$. In this figure, the distortion of uniform sampling is  also depicted (purple dotted-line curve), which constitutes an upper bound on the optimal minimum average distortion $\mathsf{D}_{\min}$.  Therefore, the optimal $\mathsf{D}_{\min}$ lies somewhere in between the stretched-line blue curve and the dotted-line purple curve. Remember that for the case of $N=9\leq M\leq 2N=18$, we had an explicit upper bound on the optimal average distortion $\mathsf{D}_{\min}$ in Theorem \ref{T:NM2N}. This upper bound is the red dashed curve. We make the following observations about these curves:
\begin{itemize}
\item The performance of the uniform sampling is close to optimal  for $M\leq N=8$ (its curve almost matches that of the  lower bound in Lemma \ref{lemma:lower:1} which is optimal for $M\leq N$). From here, we conclude that while uniform sampling is not optimal for $M\leq N$, it is near optimal. A numerical calculation shows that we get a percentage gain of about 0.3 by using the optimal non-uniform sampling strategy. However, a main advantage of the optimal nonuniform sampling is its \emph{robustness} with respect to missing samples (as discussed in the introduction).
\item The distortion of uniform sampling is not monotonically decreasing in the sampling frequency. While increasing the number of  samples leads to a better performance in nonuniform sampling, this is not necessarily the case for uniform sampling. 
\item The curve for uniform sampling distortion matches the second lower bound at $M=35$ onwards; uniform sampling is optimal for $M> 2N_2 = 34$ (consistent with  Theorem \ref{T1c}).
\item For the given choice of parameters the explicit upper bound of Theorem \ref{T:NM2N} can come below the distortion of uniform sampling. 
\end{itemize}

\begin{figure*}[t]
\centering
\includegraphics[scale=0.3,angle = 0]{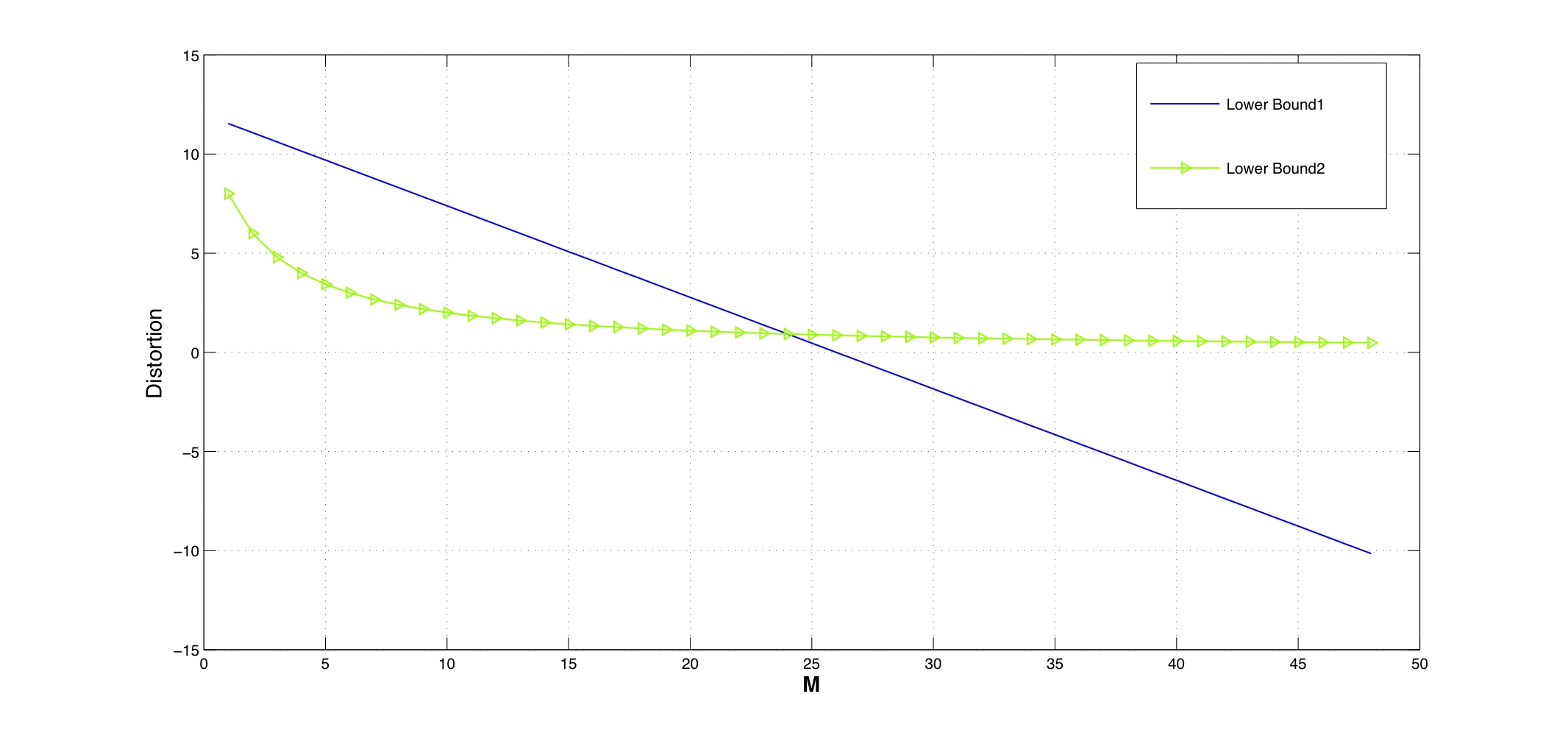}
\caption{{  The lower bounds on the  average  distortion  given in Lemmas \ref{lemma:lower:1} and \ref{lemma:lower:2}.}}
\vspace{0.4cm}
\label{Lower12}
\end{figure*} 

\begin{figure*}[t]
\centering
\includegraphics[scale=0.32,angle=0]{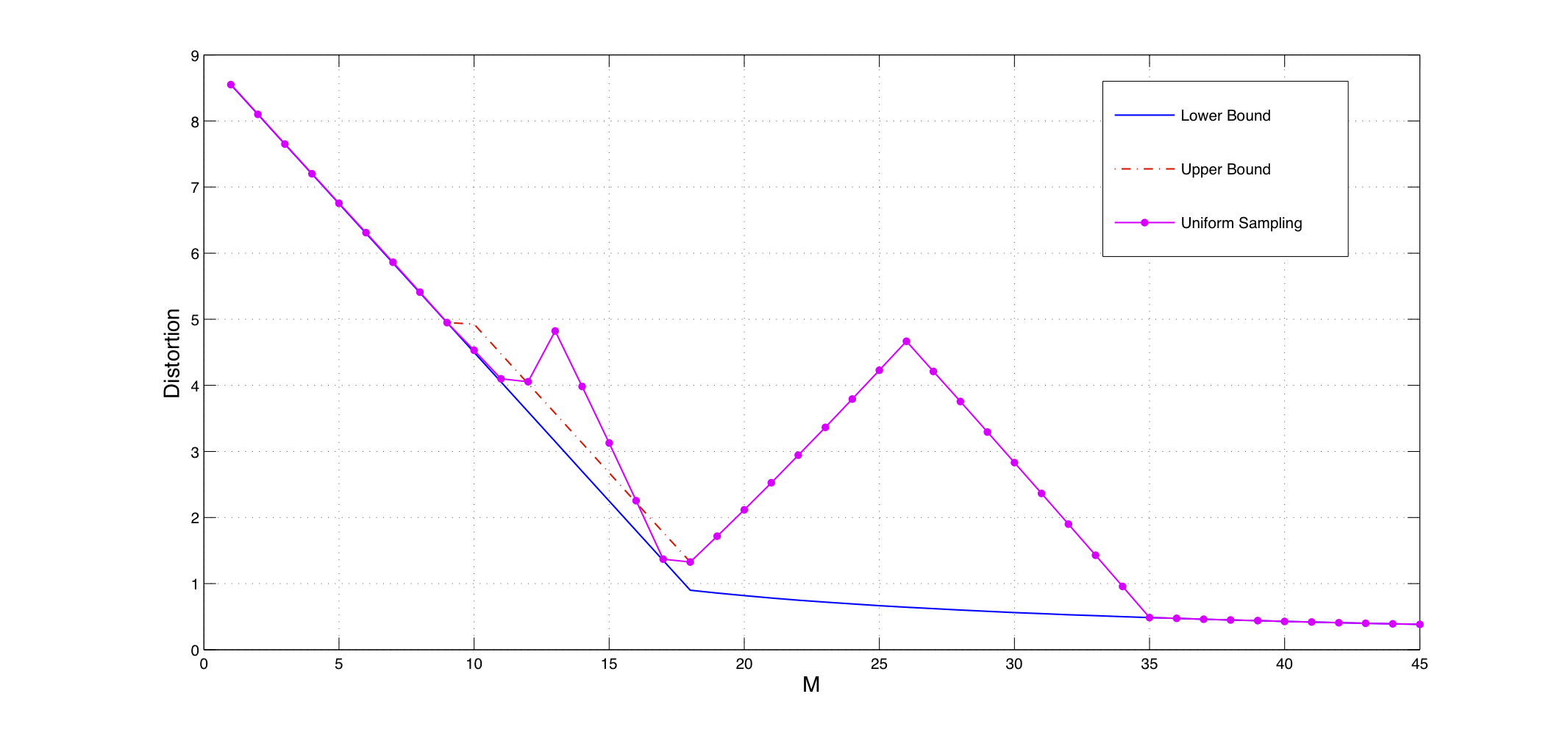}
\caption{{  The average  distortion for $(N_1, N, \ps, \sigma) = (9, 9, 1, 1)$. The signal period is $T=1$. Here $M$ is shown on the x-axis. The union lower bound (blue stretched-line curve) and the upper bound  given in Theorem \ref{T:NM2N} (red dashed curve) for $M$ between $N$ and $2N$ are depicted. The performance of uniform sampling without using a  pre-sampling filter is also plotted (purple dotted-line curve). }}
\label{lowersupper2}
\end{figure*}

\subsection{Suboptimality of filtering at half the sampling rate }\label{sec:filter-Half}

Consider the case of $M=N$. According to Theorem \ref{T1a}, the uniform sampling set
\begin{align}\{t_1, t_2, \cdots, t_M\}=\{0, \frac{1}{N}T, \frac{2}{N}T, \frac{3}{N}T, \cdots, \frac{N-1}{N}T\}\label{eqn:anti-aliasing2}\end{align}
is optimal and obtains the minimum distortion
$$\mathsf{D}_{\min}=\frac {\ps}{2} \left(M+\frac{M}{1+SNR}\right).$$
Here, it is assumed that no pre-sampling filter is used  (following from the statement of Theorem \ref{thm:pre-sampling}). Observe that the signal bandwidth, $Nf_0$ is equal to the sampling rate $Mf_0$. Here the sampling rate is \emph{not} twice the signal bandwidth. Suppose we use the uniform sampling points given in \eqref{eqn:anti-aliasing2}, and employ a pre-sampling filter that reduces the signal bandwidth to \emph{half} the sampling rate as follows (an anti-aliasing filter):
\begin{align}H(\ell \omega_0)=\begin{cases}1,&N_1\leq\ell< N_1+M/2
\\0,&\text{otherwise}\end{cases}.\label{eqn:filtereq1}\end{align}
Here the number of samples, $M$, is assumed to be even.  This filter makes all of the coefficients $A_{\ell}$ and $B_\ell$ equal to zero, except for $M$ variables $A_\ell, B_\ell$ when $N_1\leq\ell< M/2+N_1$.  We then have the following theorem:

\begin{theorem} \label{thm:unifMlessN}

Suppose that we insist on using uniform sampling at rate $M$, \emph{i.e.,}  taking the points $$\{t_1, t_2, t_3, \cdots, t_M\} =\{0, \frac{1}{M}T, \frac{2}{M}T, \frac{3}{M}T, \cdots, \frac{M-1}{M}T\}.$$ 
Then, the performance of the filter $H$ satisfies:
\begin{align}\mathsf{D}_{H}>\frac{\ps}{2} \left(M+\frac{M}{1+\frac 1{2}SNR}\right)\label{eqn:dmind2},\end{align}
which is strictly greater than 
$$\mathsf{D}_{\min}=\frac {\ps}{2} \left(M+\frac{M}{1+SNR}\right),$$
even when we increase the signal power by a factor of two.
\end{theorem}
See Appendix \ref{proof:4:2:2} for a proof.
\begin{discussion}\emph{
Anti-aliasing filter $H$ is analogous to an \emph{interference cancellation} scheme when we  interpret aliasing as interference. But without a filter, we can do \emph{interference management} and use the interfered aliased information to recover the signal. To convey the essential intuition, consider the following different but simpler problem: suppose $\mathsf{A}, \mathsf{B}\sim \mathcal{N}(0,\ps)$ are two independent Gaussian random variables. We are interested in recovering these two variables via one observation that is corrupted by a Gaussian noise of variance $\sigma^2$. The interference cancellation strategy corresponds to discarding $\mathsf{B}$ and observing $\mathsf{A}+Z$, where $Z\sim\mathcal{N}(0, \sigma^2)$ is the noise. This yields an estimation error of \begin{align}\ps+\frac{\ps\sigma^2}{(\ps+\sigma^2)}=\ps(1+\frac{1}{1+SNR/2}),\label{eqn:intuition1}\end{align}
where the $\ps$ and $\ps\sigma^2/(\ps+\sigma^2)$ are the estimation errors for $\mathsf{B}$ and $\mathsf{A}$, respectively, and $SNR$ is defined as $2\ps/\sigma^2$. On the other hand, an interference management scheme observes $\mathsf{A}+\mathsf{B}+Z$. Here both $\mathsf{A}$ and $\mathsf{B}$ are interference terms for each other. The total estimation error in this case is equal to
\begin{align}2\frac{\ps(\ps+\sigma^2)}{2\ps+\sigma^2}=\ps(1+\frac{1}{1+SNR}).\label{eqn:intuition2}\end{align}
In comparison to \eqref{eqn:intuition1}, the SNR gain of two is attained due to the  interference management.
}\end{discussion}

\section{Extensions}\label{sec:extensions1}

\subsection{Sparse Signals}\label{sparsesection}
Consider an extension of the results to the case when the coefficients  $A_\ell$ and $B_\ell$ in \eqref{eqdefs} are mutually independent zero-mean Gaussian r.v.s  with arbitrary variances $\ps_\ell$ for $N_1 \leq \ell \leq N_2$. 
Observe that the signal is \emph{sparse} in the frequency domain if $\ps_\ell$ is non-negligible for few values of $\ell$.  Here, we assume that the frequency support of the signal is known. 

For the case of positive equal variances $\ps_\ell=\ps>0$ for $\ell = N_1, \cdots, N_2$, we have provided two general lower bounds on the average distortion. When $\ps_\ell$ are arbitrary, we can generalize the second lower bound given in Lemma \ref{lemma:lower:2} as follows:
\begin{proposition}\label{Proposition:Sparse} 
 For any $\sigma>0$, the following lower bound on the average distortion holds for all values of $M$:
\begin{align} 
\mathsf{D}_{\min} &\geq \sum_{\ell =N_1}^{N_2} 
\frac{1}{\frac{1}{\ps_\ell}+\frac{M}{2\sigma^2} }.
\end{align}
The above inequalities are tight for some values of $M$ and $N$. The condition for equality is the possibility of choosing the sampling time instances to satisfy the following equation:
\begin{align}
\sum_{i=1}^{M} e^{j2 \pi k\frac{t_i}{T}} = 0 \qquad \emph{for} \quad k\in\{1, 2, \cdots, N-1\}\cup\{2N_1, 2N_1+1, \cdots, 2N_2\}.
\label{eqn:for:thm:unique3inlemma2sparse}
\end{align}
Uniform sampling, \emph{i.e.,}  $t_i=iT/{M}, i=1, 2, \cdots, M$ is a solution to the above equation if for each $k$ in the interval $ 2N_1 \leq k \leq 2N_2$, $M$ does not divide $k$. In particular, for the rates above the Nyquist rate, \emph{i.e.,}  $M>2N_2$, uniform sampling is optimal. 
\end{proposition}
See Appendix \ref{proof:4:5:2} for a proof.

\subsection{Discrete Signals}
\label{sec:extensions}
Consider a real periodic discrete signal of the form
$S[n]=\sum_{\ell=N_1}^{N_2}[A_\ell\cos(\ell\omega_0 n)+B_\ell\sin(\ell\omega_0 n)],$
where $\omega_0=\frac{2\pi}{T}$ for some integer $T$, and $1\leq N_1\leq  N_2<T/2$ are natural numbers. Observe that $T$ is the period of the discrete signal and $A_\ell-j B_\ell$ is the $\ell$-th DFT coefficient of $S[n]$. 

Suppose that we have $M$ noisy samples at time instances $\{t_1, t_2, \cdots, t_M\}$. We would like to use these samples to reconstruct the discrete signal $S[n]$. If the reconstruction is $\hat{S}[n]=\sum_{\ell=N_1}^{N_2}[\hat{A}_\ell\cos(\ell\omega_0 n)+\hat{B}_\ell\sin(\ell\omega_0 n)]$, the distortion
$
 \sum_{n=1}^{T}|S[n]-\hat{S}[n]|^2 
$
is proportional to $
 \sum_{\ell}|A_\ell-\hat{A}_\ell|^2+|B_\ell-\hat{B}_\ell|^2 
$. Thus, the formulation for minimizing the distortion is the same as that of the continuous  signals which is given in  \ref{sec:matrix-form}. The only exception is that we additionally have the restriction that $t_i$ for $i = 1, 2, \cdots, M$ should be  integers. 
Thus, whenever the optimal sampling points in the continuous formulation turn out to be integer values, they are also the optimal points in the discrete case.  When the optimal sampling points  are not integers, simulation results with exhaustive search  show that their closest integer values are either optimal or nearly optimal.   For example, when the signal period is $T = 15$ and $(M,N_1,N)=(3, 1, 4)$. In this case $M \leq N$ and the optimal sampling points in the continuous case are $$\{ t_1, t_2, t_3\} \subset   \{0+\tau, 3.75+\tau, 7.5+\tau,  11.25+\tau \},$$
for some $\tau\in [0,T]$.  The rounds of these point for $\tau=1.1$ are optimal in the discrete case, \emph{i.e.,}   any choice of time instances from the set  $\{1, 5, 8, 12\}$ is optimal. 
On the other hand, when the signal period is $T = 15$ and $(M,N_1,N)=(4, 1, 6)$. Here again $M \leq N$ and $$\{ t_1, t_2, t_3, t_4\} \subset  \digamma =  \{0+\tau, 2.5+\tau, 5+\tau,  7.5+\tau, 10+\tau, 12.5+\tau\}$$ are the only continuous  optimal points. Rounding these points to  their closest integers yields $\mathsf{D} = 8.0069$. However, the optimal points are $\{1, 2, 8, 9\}$ resulting in $\mathsf{D}_{min} = 8.0068$. Note that the distance between the first two optimal points is 1, which cannot be achieved by choosing any arbitrary value of $\tau$, since the distance between any of the points in $\digamma$ is 2.5.

\appendix
\section*{Appendix}

\section{Estimator for Minimizing Variance of Distortion}
\label{AppendixA}
The conventional MMSE estimation problem for estimating a vector $\mathbf{X}$ from vector $\mathbf{Y}$ asks for minimizing the expected value of the distortion 
$\|\mathbf{X}-\hat{\mathbf{X}}\|^2$
where the estimator $\hat{\mathbf{X}}$ is created as a function of observation $\mathbf{Y}$. However, this would only ensure that the distortion is minimized on \emph{average}. In practice, we get one copy of $\mathbf{X}$ and $\mathbf{Y}$, and we want to ensure that the distortion that we obtain is small for that one copy, not just on average. Minimizing the variance of $\|\mathbf{X}-\hat{\mathbf{X}}\|^2$ makes sense because variance is a measure of concentration around the mean. For instance, by Chebyshev's inequality, the probability of  $\|\mathbf{X}-\hat{\mathbf{X}}\|^2$ exceeding a threshold depends both on the average of $\|\mathbf{X}-\hat{\mathbf{X}}\|^2$ and its variance. 

In this Appendix we show that for jointly Gaussian random variables, the estimator that minimizes the expected value of $\|\mathbf{X}-\hat{\mathbf{X}}\|^2$, also minimizes its variance.  More specifically, we know that 
$\hat{\mathbf{X}}=\mathbb{E}[\mathbf{X}|\mathbf{Y}]$ minimizes $\mathbb{E}(\|\mathbf{X}-\hat{\mathbf{X}}\|^2)$. We show that for \emph{ jointly Gaussian vectors}, $\hat{\mathbf{X}}=\mathbb{E}[\mathbf{X}|\mathbf{Y}]$ also minimizes the \emph{difference}
$$\mathsf{Var}(\|\mathbf{X}-\hat{\mathbf{X}}\|^2)=\mathbb{E}(\|\mathbf{X}-\hat{\mathbf{X}}\|^4)-\big(\mathbb{E}(\|\mathbf{X}-\hat{\mathbf{X}}\|^2)\big)^2.$$
\begin{theorem}\label{MMSE-var} Suppose $\mathbf{X}$ and $\mathbf{Y}$ are two correlated jointly Gaussian vectors, having covariance matrices $C_\X$, $C_\Y$ and $C_{\X\Y}$. Let $\hat{\mathbf{X}}$ be a function of $\mathbf{Y}$, and consider the cost
$$\mathsf{Var}(\|\mathbf{X}-\hat{\mathbf{X}}\|^2).$$
The estimator $\hat{\mathbf{X}}$ that minimizes the above cost constraint is $\hat{\mathbf{X}}=\mathbb{E}[\mathbf{X}|\mathbf{Y}]$, and the minimum variance  is equal to $2\tr(C_e^2)$, where
$$C_e = C_\X - C_{\X \Y} C_{\Y}^{-1} C_{\Y \X}.$$
\end{theorem}
\begin{proof} We have
\begin{align}\mathsf{Var}(\|\mathbf{X}-\hat{\mathbf{X}}\|^2)&=\mathbb{E}_\mathbf{Y}\mathsf{Var}(\|\mathbf{X}-\hat{\mathbf{X}}\|^2|\mathbf{Y})+\mathsf{Var}_\mathbf{Y}[\mathbb{E}(\|\mathbf{X}-\hat{\mathbf{X}}\|^2|\mathbf{Y})].
\end{align}
We claim that both of the terms $\mathbb{E}_\mathbf{Y}\mathsf{Var}(\|\mathbf{X}-\hat{\mathbf{X}}\|^2|\mathbf{Y})$ and $\mathsf{Var}_\mathbf{Y}[\mathbb{E}(\|\mathbf{X}-\hat{\mathbf{X}}\|^2|\mathbf{Y})]$ are minimized when 
$\hat{\mathbf{X}}=\mathbb{E}[\mathbf{X}|\mathbf{Y}]$. The second term is always non-negative and becomes zero when $\hat{\mathbf{X}}=\mathbb{E}[\mathbf{X}|\mathbf{Y}]$. This is because $\mathbb{E}(\|\mathbf{X}-\hat{\mathbf{X}}\|^2|\mathbf{Y})$ will be equal to $\mathsf{Var}(\mathbf{X}|\mathbf{Y})$ which is a constant and does not depend on the value of $\mathbf{Y}$ for jointly Gaussian random variables. Therefore, its variance is zero if  $\hat{\mathbf{X}}=\mathbb{E}[\mathbf{X}|\mathbf{Y}]$. 

To prove that the first term is minimized at $\hat{\mathbf{X}}=\mathbb{E}[\mathbf{X}|\mathbf{Y}]$, it suffices to show that this claim for all values of $\mathbf{y}$. We will be done if we have the following statement: for any jointly Gaussian vector $\mathbf{X}$, the function
$$f(\mathbf{c})=\mathsf{Var}(\|\mathbf{X}-\mathbf{c}\|^2)$$
is minimized at $\mathbf{c}=\mathbb{E}[\mathbf{X}]$. Let $\mathbf{X}\sim\mathcal{N}(\mathbf{\mu}, C_\X)$. Then observe that  $\mathbf{X}-\mathbf{c}\sim\mathcal{N}(\mathbf{\mu}-\mathbf{c}, C_\mathbf{X})$. If we replace $\mathbf{X}-\mathbf{c}$ by $U(\mathbf{X}-\mathbf{c})$ for some unitary matrix $U$, the norm $\|\mathbf{X}-\mathbf{c}\|^2$ remains invariant. Therefore, without loss of generality we can assume that $C_\mathbf{X}$ is diagonal. Then $\|\mathbf{X}-\mathbf{c}\|^2=\sum_{i}^n (X_i-c_i)^2$ is the sum of independent Gaussian random variables. Let $\sigma_i^2$ be the variance of $X_i$. Therefore
\begin{align}f(\mathbf{c})&=\sum_{i} \mathsf{Var}((X_i-c_i)^2)
\\&=\sum_{i} \mathbb{E}((X_i-c_i)^4)-\mathbb{E}((X_i-c_i)^2)^2
\\&=\sum_i (\mu_i-c_i)^4+6(\mu_i-c_i)^2\sigma_i^2+3\sigma_i^4-(\sigma_i^2+(\mu_i-c_i)^2)^2
\\&=\sum_i 4(\mu_i-c_i)^2\sigma_i^2+2\sigma_i^4,
\end{align}
which is clearly minimized when $c_i=\mu_i$, and the minimum will be equal to $2\sum_i \sigma_i^4=2\tr(C_\mathbf{X}^2)$.

Coming back to the original minimization problem, we see that the covariance matrix of $\mathbf{X}$ given any value for $\mathbf{Y}=\mathbf{y}$, does not depend on the value of $\mathbf{y}$, and is equal to $C_e$, where
$C_e = C_\X - C_{\X \Y} C_{\Y}^{-1} C_{\Y \X}.$
Therefore, the overall minimum variance is equal to 
$2\tr(C_e^2).$
\end{proof}

\section{Proofs} \label{proofs}
Before giving the proofs, we first provide a matrix representation of the problem in Section \ref{sec:matrix-form}.
\subsection{Problem Representation in a Matrix Form}\label{sec:matrix-form}

Let $\mathbf{X}$ be the vector of coefficients $A_\ell$ and $B_\ell$ ($\ell = N_1,N_1+1, \cdots, N_2$) of the original signal,      
$S(t)$,  given  in  \eqref{eqdefs}, \emph{i.e.,}  
\vspace{-0.3cm}
\begin{bbox}
\vspace{-0.4cm}
\begin{align}\mathbf{X}=[A_{N_1}, \cdots, A_{N_2}, B_{N_1}, \cdots, B_{N_2}]^\dagger.\label{eqn:def:coef:AB}\end{align}\vspace{-0.7cm}
\end{bbox}\vspace{-0.3cm}
Since the vector $\mathbf{X}$ consists of real numbers, $\dagger$ is just the transpose operation in the above equation.
After filtering $S(t)$, we get  
\begin{align}\tilde{S}(t)=\sum_{\ell=N_1}^{N_2}[\tilde{A}_\ell\cos(\ell\omega_0 t)+\tilde{B}_\ell\sin(\ell\omega_0 t)]\label{eqdefs2}
\end{align} 
in which 
\begin{align}\begin{cases}\tilde{A}_{\ell}=A_\ell H_R(\ell \omega_0)+B_\ell H_I(\ell \omega_0),\\ \tilde{B}_{\ell}=B_\ell H_R(\ell \omega_0)-A_\ell H_I(\ell \omega_0),\end{cases}\label{eqn:coeffs}\end{align} 
for $\ell = N_1,N_1+1, \cdots, N_2$, where $H_R(\omega)$ and $H_I(\omega)$ are the real and imaginary parts of $H(\omega)$, respectively. If we denote the vector of coefficients of  $\tilde{S}(t)$ as $$\tilde{\mathbf{X}}=[\tilde A_{N_1}, \cdots, \tilde A_{N_2}, \tilde B_{N_1}, \cdots,  \tilde B_{N_2}]^\dagger$$ 
and express \eqref{eqn:coeffs} in a matrix form, we get
 $$\tilde{\mathbf{X}}=L\X,$$ where $L$ is of the following form
\begin{align}L=\begin{pmatrix}{L_1} & {L_2}\\-{L_2} & {L_1}\end{pmatrix},\label{defmatrixLnew}\end{align} in which
$$L_1=\begin{pmatrix}H_R(N_1\omega_0)&0&\cdots&0\\
 0&H_R((N_1+1)\omega_0)&0&0\\
&\vdots& \\
 0& \cdots&0&H_R(N_2\omega_0)\end{pmatrix},$$
\vspace{0.01cm}
$${L_2}=\begin{pmatrix}H_I(N_1\omega_0)&0&\cdots&0\\
 0&H_I((N_1+1)\omega_0)&0&0\\
&\vdots& \\
 0& \cdots&0&H_I(N_2\omega_0)\end{pmatrix}.$$

The signal $\tilde{S}(t)$ is then sampled at time instances $t_i$ for $i = 1,2 \cdots, M$. We have
$$\tilde{S}(t_i)=\sum_{\ell=N_1}^{N_2}[\tilde{A}_\ell\cos(\ell\omega_0 t_i)+\tilde{B}_\ell\sin(\ell\omega_0 t_i)].$$ Let $\mathbf{\tilde{S}}$ be the vector of the samples
$$\mathbf{\tilde{S}}=[\tilde{S}(t_1), \tilde{S}(t_2), \cdots,  \tilde{S}(t_M)]^\dagger.$$
Therefore, in the matrix form we have $\mathbf{\tilde{S}}=Q\tilde{\X}=QL\X$, where $Q$ is an $M\times 2N$ matrix of the form
\footnotesize
$$Q=\begin{pmatrix}\cos(N_1\omega_0 t_1) & \cos((N_1+1)\omega_0 t_1) &\cdots&\cos(N_2\omega_0 t_1)
&\sin(N_1\omega_0 t_1) & \sin((N_1+1)\omega_0 t_1) &\cdots&\sin(N_2\omega_0 t_1)
\\
\cos(N_1\omega_0 t_2) & \cos((N_1+1)\omega_0 t_2) &\cdots&\cos(N_2\omega_0 t_2)
&\sin(N_1\omega_0 t_2) & \sin((N_1+1)\omega_0 t_2) &\cdots&\sin(N_2\omega_0 t_2)\\
&&&\vdots&\vdots&&&\\
\cos(N_1\omega_0 t_M) & \cos((N_1+1)\omega_0 t_M) &\cdots&\cos(N_2\omega_0 t_M)
&\sin(N_1\omega_0 t_M) & \sin((N_1+1)\omega_0 t_M) &\cdots&\sin(N_2\omega_0 t_M)
\end{pmatrix}.$$\normalsize
Finally, the vector of observations denoted by $\mathbf{Y}=\mathbf{\tilde{S}}+\mathbf{Z}$ can be written as
\vspace{-0.3cm}
\begin{bbox}
\vspace{-0.2cm}$$\mathbf{Y}=QL\mathbf{X}+\mathbf{Z},$$\vspace{-0.7cm}\end{bbox}\vspace{-0.3cm}
where $\mathbf{Z}$ is the noise vector. 

Estimation of $S(t)$ is equivalent to the estimation of its Fourier coefficients. If we denote the coefficients of the reconstruction signal by 
$$\hat{\mathbf{X}}=[\hat{A}_{N_1}, \cdots, \hat{A}_{N_2}, \hat{B}_{N_1}, \cdots, \hat{B}_{N_2}]^\dagger,$$ using the Parseval's theorem, the sampling distortion is equal to
\begin{align}\frac{1}{T}\int_{t=0}^T|\hat{S}(t)-S(t)|^2dt &= \frac12 \sum_{\ell=N_1}^{N_2}\left(|\hat{B}_\ell-B_\ell|^2+|\hat{A}_\ell-A_\ell|^2\right)\nonumber
\\&=\frac 12\|\mathbf{X}-\hat{\mathbf{X}}\|^2\label{eqn:Parseval},\end{align}
which is a random variable. Our goal is to minimize its average and variance. 

\textbf{Computing average distortion:}
We use $\mathbf{Y}$ to estimate the signal with minimal distortion. In other words, from  
\eqref{eqn:Parseval}, we would like to minimize 
\begin{align}\mathsf{D}=\frac{1}{T}\int_{t=0}^T\mathbb{E}\{|\hat{S}(t)-S(t)|^2\}dt=\frac 12 \mathbb{E}\|\mathbf{X}-\hat{\mathbf{X}}\|^2.\label{eqn:Dvalue1}\end{align}
 Since all the random variables are Gaussian, the linear MMSE is optimal and thus we would like to use $QL\mathbf{X}+\mathbf{Z}$ to find $\hat{\mathbf{X}}$ such that $E\|\mathbf{X}-\hat{\mathbf{X}}\|^2$ is minimized.

The mean square error is given by
\begin{align}
\mathbb{E}\| \mathbf{X}-\hat{\mathbf{X}} \|^2 & =\mathbb{E}_{\mathbf{Y}}\left \{ \mathsf{Var}[\mathbf{X} | \mathbf{Y}]\right\}=\tr(C_e), \label{eqn:mmseCe1}
\end{align}
where the error covariance matrix $C_e$ is of the form 
\begin{align} C_e &=  C_{\X} - C_{\X\Y} C_{\Y}^{-1} C_{\Y\X}
\\& =  C_{\X}- C_{\X}  L^\dagger Q^\dagger(Q L C_{\X} L^\dagger Q^\dagger+C_{\Z})^{-1} Q L C_{\X}   \nonumber
\\ &=\ps I- \ps^2L^\dagger Q^\dagger(\ps QLL^\dagger Q^\dagger + \sigma^2 I)^{-1}QL \label{Ce1}.\end{align}
In the above formulas, we have used the fact that $C_{\X}= \ps I_{2N \times 2N} $ and $C_{\Z} = \sigma^2 I_{M \times M}$. Therefore, from \eqref{eqn:Dvalue1} \vspace{-0.3cm}\begin{bbox}\vspace{-0.3cm}$$\mathsf D=\tr(C_e)/2$$\vspace{-0.7cm}\end{bbox}\vspace{-0.3cm}  and if we use  \eqref{Ce1}, it follows that:
\begin{align}2\mathsf{D}=\tr(C_e) &=\tr\left(\ps I- \ps^2 L^\dagger Q^\dagger (\ps QLL^\dagger Q^\dagger + \sigma^2 I)^{-1} QL\right)\label{eqn:eqstart1}
\\ &=2N\ps -\ps \cdot\tr\left(\ps L^\dagger Q^\dagger(\ps QLL^\dagger Q^\dagger + \sigma^2 I)^{-1} QL\right)\nonumber
\\ &=2N\ps -\ps \cdot\tr\left( (\ps QLL^\dagger Q^\dagger + \sigma^2 I)^{-1} (\ps QLL^\dagger Q^\dagger)\right)\label{Eq2}
\\ &=2N\ps -\ps \cdot\tr\left( (\ps QLL^\dagger Q^\dagger + \sigma^2 I)^{-1} (\ps QLL^\dagger Q^\dagger+\sigma^2I-\sigma^2I)\right)\nonumber
\\&=2N\ps -\ps \cdot\tr\left (I-\sigma^2 (\ps QLL^\dagger Q^\dagger + \sigma^2 I)^{-1} \right)\label{Sigmagoestozero}
\\&=(2N-M)\ps +\ps\sigma^2\tr\left((\ps QLL^\dagger Q^\dagger + \sigma^2 I)^{-1}\right) \label{eqn:Distortion1},
\end{align}
where  \eqref{Eq2} results from the fact that  the trace operator is invariant under cyclic permutations.

There is an alternative form of  the Linear MMSE estimator (\cite[p.~90]{Luenberger}) in which  the error covariance matrix $C_e$ is of the form 
\begin{align}  C_e & = (L^\dagger Q^\dagger C_{\Z}^{-1}  QL+C_{\X}^{-1}  )^{-1} \label{Ce2half}
\\&= \ps\sigma^2 (\ps L^\dagger Q^\dagger QL + \sigma^2 I)^{-1} \label{Ce2}.\end{align}
Equivalently, $\mathsf{D}$ can be found using  \eqref{Ce2} as
\begin{align}2\mathsf{D}=\tr(C_e) &=\ps \sigma^2\tr (\ps L^\dagger Q^\dagger QL + \sigma^2 I)^{-1} \label{eqn:Distortion2}.
\end{align}
\vspace{-0.7cm}
\begin{ybox}
\textbf{When $L=I$:} We can express the two distortion formulas given in \eqref{eqn:Distortion1} and \eqref{eqn:Distortion2} in terms of the following two matrices:
$$\Delta_1=\ps Q Q^\dagger  + \sigma^2 I,$$
$$\Delta_2=\ps Q^\dagger Q + \sigma^2 I$$
as follows:
\begin{align}2\mathsf{D}&=(2N-M)\ps +\ps\sigma^2\tr\left(\Delta_1^{-1}\right)\label{eqn:Distortion2mmmm1}
\\&=\ps \sigma^2\tr \Delta_2^{-1} \label{eqn:Distortion2mmmm2}.
\end{align}
A direct calculation shows that the entries of matrix $\Delta_1$ (an $M\times M$ matrix) are:
\begin{align}\Delta_{1(i,k)}&=\ps\left(\sum_{\ell=N_1}^{N_2} \cos(\ell\omega_0(t_i-t_k))\right)+\sigma^2 \mathbf{1}[i=k], \label{eqn:PiEntries}\end{align}
where $\mathbf{1}[i=k]$ is one if $i=k$ and zero otherwise.
Moreover, a direct calculation shows that the diagonal entries of matrix $\Delta_2$ (a $2N\times 2N$ matrix) are given by:
\begin{align}
\Delta_{2(k,k)} =\begin{cases}\sigma^2+\ps \sum_{i=1}^{M}\cos^2(\ell\omega_0
t_i);~~\text{where }\ell = k+N_1-1 \text{ for } k  = 1, \cdots , N,
\\
\sigma^2+\ps \sum_{i =1}^{M}\sin^2(\ell\omega_0 t_i);
  ~~\text{where }\ell = k+N_1-1-N\text{ for } k= N+1, \cdots , 2N.\end{cases} \label{eqn:GammaEntries}\end{align}
\vspace{-0.3cm}
\end{ybox}
\vspace{-0.3cm}

\normalsize
\textbf{Computing variance of distortion:}

The variance of distortion, using the Parseval's theorem given in  \eqref{eqn:Parseval}, is of the form
\begin{align}
\mathsf{V}&=\mathsf{Var}\left\{\frac{1}{T}\int_{t=0}^T|\hat{S}(t)-S(t)|^2dt\right\}
=\frac 12 \mathsf{Var}\{\|\mathbf{X}-\hat{\mathbf{X}}\|^2\}.
\end{align}
Using Theorem \ref{MMSE-var} from Appendix \ref{AppendixA}, we have 
\begin{align}
\mathsf{Var}(\|\X-\hat{\X}\|^2) =2\tr(C_e^2).  \end{align}
Hence, the variance of distortion will be
\vspace{-0.3cm}
\begin{bbox}
\vspace{-0.2cm}
$$\mathsf{V}=\tr(C_e^2).$$\vspace{-0.7cm}
\end{bbox}\vspace{-0.3cm}
Using similar steps as the ones used in computing average distortion, from \eqref{Ce1},  $\mathsf{V}$ is found to be 
\begin{align}
\mathsf{V} = \frac 12 \mathsf{Var}(\|\X-\hat{\X}\|^2) &= \tr(C_e^2)   \nonumber
\\ & =\tr[\left(\ps I- \ps^2 L^\dagger Q^\dagger (\ps QLL^\dagger Q^\dagger + \sigma^2 I)^{-1}  QL\right)^2]
\\&=\ps^2 [2N-M  + \sigma^4 \tr((\ps QLL^\dagger Q^\dagger + \sigma^2 I)^{-2})], \label{Var1}
\end{align}
where the above equality holds because by defining $\Pi_{M\times 2N}=QL$, we can use the identity
$$\tr[\left(\ps I- \ps^2 \Pi^\dagger (\ps \Pi\Pi^\dagger + \sigma^2 I)^{-1}  \Pi\right)^2]=
\ps^2 [2N-M  + \sigma^4 \tr((\ps \Pi\Pi^\dagger + \sigma^2 I)^{-2})].$$
Alternatively, using the second form of MMSE given in \eqref{Ce2}, we get 
\begin{align}
\mathsf{V} = \tr(C_e^2) = \ps^2 \sigma^4 \tr(\ps L^\dagger Q^\dagger QL + \sigma^2 I)^{-2}
 \label {Var2}.\end{align}
\vspace{-0.3cm}
\begin{ybox}
\textbf{When $L=I$:} We can express the variance formulas in terms of matrices
$$\Delta_1=\ps Q Q^\dagger  + \sigma^2 I,\qquad \Delta_2=\ps Q^\dagger Q + \sigma^2 I$$
as follows:
\begin{align}\mathsf{V} &=\ps^2 [2N-M  + \sigma^4 \tr(\Delta_1^{-2})] \label{eqn:Var2mmmm1}
\\&=\ps^2 \sigma^4 \tr(\Delta_2^{-2}).\label{eqn:Var2mmmm2}
\end{align}\vspace{-0.5cm}
\end{ybox}
\vspace{-0.3cm}
So far, two closed formulas for $\mathsf{D}$ and $\mathsf{V}$ have been derived in equations \eqref{eqn:Distortion2mmmm1}, \eqref{eqn:Distortion2mmmm2}, \eqref{eqn:Var2mmmm1} and \eqref{eqn:Var2mmmm2}. In the following subsections, the proof of the main results will be developed using the notation and formulas given in  this subsection.
\subsection{Proof of Theorem \ref{thm:pre-sampling} (Section \ref{sec:main-results})}
We would like to show that setting $L=I$ is optimal. The coordinates of matrix $L$ are determined by the LTI filter as given in equation \eqref{defmatrixLnew}. Observe that
\begin{align}LL^\dagger= \begin{pmatrix}{L_1} & {L_2}\\-{L_2} & {L_1}\end{pmatrix}\begin{pmatrix}{L_1} & -{L_2}\\{L_2} & {L_1}\end{pmatrix}=\begin{pmatrix}{L_1}^2+ {L_2}^2&0\\0 & {L_1}^2+ {L_2}^2\end{pmatrix}.\label{eqnLLdaggerlessI}\end{align}
The matrix \small
$$L_1^2+L_2^2=\begin{pmatrix}H_R^2(N_1\omega_0)+H^2_I(N_1\omega_0)&0&\cdots&0\\
 0&H^2_R((N_1+1)+H^2_I((N_1+1)\omega_0)&0&0\\
&\vdots& \\
 0& \cdots&0&H_R^2(N_2\omega_0)+H_I^2(N_2\omega_0)\end{pmatrix}$$\normalsize
has diagonal entries that are less than or equal to one, by the passive filter condition. Therefore, from \eqref{eqnLLdaggerlessI} and the assumption that the filter is passive, we obtain the constraint $LL^\dagger \leq I$,  where by $A\leq B$ we mean that $B-A$ is positive semi-definite.

Hence, $Q(LL^\dagger-I) Q^\dagger \leq 0$, implying that
\begin{align}\ps QLL^\dagger Q^\dagger + \sigma^2 I\leq \ps QQ^\dagger+ \sigma^2 I.\label{eqn:ordernewsq}\end{align}
We now use the following inequality:

\begin{theorem}[Klein's inequality]\cite{Bhatia} For any two symmetric positive definite matrices $A$ and $B$ and any differentiable convex function on $(0,\infty)$, we have 
$$\tr(f(A)-f(B)]\geq \tr((A-B)f'(B)).$$
\end{theorem}
Let $f(x)=x^{-1}$ and $A=\ps QQ^\dagger+ \sigma^2 I$ and $B=\ps QLL^\dagger Q^\dagger + \sigma^2 I$. Then from \eqref{eqn:ordernewsq}, we have $B\geq A$. Now, observe that $(A-B)f'(B)=(B-A)B^{-2}$ is the product of two positive semi-definite matrices. Hence, its trace is non-negative.\footnote{Observe that if $A$ and $B$ are positive semi-definite, then $\tr(AB)=\tr(AB^{1/2}B^{1/2})=\tr(B^{1/2}AB^{1/2})\geq 0.$} Thus,
$\tr[f(A)-f(B)]\geq 0.$ We get that
\begin{align}\tr (\ps QLL^\dagger Q^\dagger + \sigma^2 I)^{-1} \geq \tr (\ps QQ^\dagger + \sigma^2 I)^{-1} .\end{align}
Not using any filter on the signal bandwidth is equivalent with $LL^\dagger=I$. Hence, pre-filtering is not helpful in reducing the MMSE. The pre-filtering is also not helpful in reducing variance of the distortion. It is shown that 
$$\mathsf{Var}(\|\X-\hat{\X}\|^2)=2\ps^2(2N-M)  + 2\ps^2\sigma^4 \tr\left((\ps QLL^\dagger Q^\dagger + \sigma^2 I)^{-2}\right),$$
and one can employ Klein's inequality for $f(x)=x^{-2}$ in the same manner.

\subsection{Proof of Lemma \ref{lemma:lower:1} (Section \ref{sec:lemma:1:lower:ab})} \label{Lower1proof}
Since the noise variance $\sigma^2 >0$, the matrix $\Delta_1=\ps QQ^{\dagger}+ \sigma^2 I$ is positive definite and real. In addition, from \eqref{eqn:PiEntries}, all the diagonal entries of $\Delta_1$ are  $N \ps +\sigma^2$.
Applying Theorem \ref{thm:lemma:diag} for the matrix $\Delta_1$ and the convex function $f(x)=x^{-1}$ on $x>0$, we have  
\begin{align}\tr\left(\Delta_1^{-1}\right)\geq \sum_{j=1}^M \Delta_{1(j,j)}^{-1}= \frac{M}{N\ps +\sigma^2}.\label{Eq6prime}\end{align}
Substituting \eqref {Eq6prime} into \eqref{eqn:Distortion2mmmm1}, results in a lower bound on distortion as follows 
\begin{align*}2\mathsf D&=(2N-M)\ps +\ps \sigma^2\tr\left(\Delta_1^{-1}\right)
\\&\geq (2N-M)\ps +\ps \sigma^2 \frac{M}{N\ps +\sigma^2}
\\&= (2N-M)\ps +\ps \frac{M}{1+SNR}.
\end{align*}
Note that we have defined $SNR = N\ps/\sigma^2$. 
Because sampling time instances $t_i$ are arbitrary,  for any value of $M$, we obtain
\begin{align}\frac{\mathsf{D}_{\min}}{\ps}&\geq  \frac 12 \left(2N-M+\frac{M}{1+SNR}\right).\label{LB1}
\end{align}
Since $f(x)=x^{-1}$ is a strictly convex function for $x>0$, equality in the above equation holds if and only if $\Delta_1=\ps Q Q^{\dagger}+ \sigma^2 I$ is a diagonal matrix. Therefore, using \eqref{eqn:PiEntries}, we have the following equation for $1\leq k< i \leq M$ (remember that $\omega_0=2\pi/T$):
\begin{align*}
\Delta_1(i,k) &=  \sum_{\ell=N_1}^{N_2}\cos(\ell\omega_0(t_i-t_k))=\mathsf{Real} \{ \sum_{\ell=N_1}^{N_2} e^{j2\pi\ell\frac{t_i-t_k}{T} }\} 
\\&= \mathsf{Real} \{  e^{j\pi \frac{t_i-t_k}{T} (N_1+N_2)}. ~\frac{\sin(\pi  \frac{t_i-t_k}{T} N)}{\sin(\pi  \frac{t_i-t_k}{T} )} \}
\\&=\cos(\pi \frac{t_i-t_k}{T} (N_1+N_2)) .~ \frac{\sin(\pi  \frac{t_i-t_k}{T} N)}{\sin(\pi  \frac{t_i-t_k}{T} )}=0.
\end{align*}
Consequently, the  sampling time instances $t_i$ should satisfy the following equations: 
\begin{align}
 \begin{cases}
\sin(\pi N \frac{(t_i - t_k)}{T}) = 0, \\ 
\qquad\text{or} \\
\cos(\pi (N_1+N_2) \frac{(t_i - t_k)}{T}) = 0. 
\end{cases} \label{eqn:sincos1}\end{align}
Thus, given any $i$ and $j$, we should either have
\begin{align*}|t_i - t_j|&=T\frac {m_1}{N},& \qquad~~\quad\text{for some integer } m_1,\end{align*}or\begin{align*}
|t_i - t_j|&=T\frac{2m_2+1}{2(N_1+N_2)},& \text{for some integer }  m_2.
\end{align*}

\emph{Derivation of the lower bound for variance:}\\
 Similarly, the lower bound on the  variance is developed using \eqref{eqn:Var2mmmm1} and Theorem \ref{thm:lemma:diag} for the convex function $f(x)=x^{-2}$ on $x>0$. From Theorem \ref{thm:lemma:diag}, we have
\begin{align}\tr\left(\Delta_1^{-2}\right) \geq
\sum_{j=1}^M \Delta_{1(j,j)}^{-2} = \frac{M}{(N\ps +\sigma^2)^2}.\end{align}
And thus using  \eqref{eqn:Var2mmmm1} , we can conclude that 
$$ \mathsf{V}_{\min} \geq \ps^2 \left [2N-M  + \sigma^4 \frac{M}{(N \ps +\sigma^2)^2}\right] =\ps^2 \left[2N-M  + \frac{M}{(1 +SNR)^2}\right] .$$
Here again,  equality holds if and only if $\Delta_1=\ps QQ^{\dagger}+ \sigma^2 I$ is a diagonal matrix.

\subsection{Proof of Theorem \ref{T1a} (Section \ref{sec:lemma:1:lower:ab})}\label{proof:4:2:1} 
Here we are in the case of $M\leq N$. In Lemma \ref{lemma:lower:1}, the desired lower bounds on $\mathsf{D}_{\min}$ and $\mathsf{V}_{\min}$ were developed. Moreover, we showed that these lower bounds are achievable if for any time instances $t_i$ or $t_j$ 
\begin{align}|t_i - t_j|&=T\frac {m_1}{N},& \text{for some integer } m_1,\label{optimaltimes1}\\ or~~~~
|t_i - t_j|&=T\frac{2m_2+1}{2(N_1+N_2)},& \text{for some integer }  m_2.\label{optimaltimes2}
\end{align} 
From \eqref{optimaltimes1}, we conclude that any arbitrary choice of time instances from the set
$$\{\tau, \tau+\frac{1}{N}T, \tau+\frac{2}{N}T, \tau+\frac{3}{N}T, \cdots, \tau+\frac{N-1}{N}T\}$$
 is optimal.

To find the interpolation formula, consider the MMSE estimator of $\X$ from $\Y=Q\mathbf{X}+\mathbf{Z}$. The estimator $\hat{\X}$ that minimizes $E\|\mathbf{X}-\hat{\mathbf{X}}\|^2$ can be written as $W\Y$, where $W=C_{\X\Y}C_{\Y}^{-1}=\ps Q^\dagger \Delta_1^{-1}$.
Since  with the optimal choice of $t_i$, $\Delta_1=(N\ps+\sigma^2)I$, the Fourier coefficients of the reconstructed signal are as follows:
\begin{align}\hat{\X}=\frac{\ps}{N\ps+\sigma^2}Q^\dagger\Y\label{eqn:hatXrec1}.\end{align}
The above formula for $\hat{\X}$ results in  the desired reconstruction formula in \eqref{eqn:Interpolation1}.

\subsection{Proof of Proposition  \ref{thm:unique} (Section \ref{sec:lemma:1:lower:ab}) }\label{proof:4:2:3}
Remember the lower bounds given in Lemma \ref{lemma:lower:1}. We showed that equality in this lemma holds if and only if 
\begin{align}|t_i - t_j|&=T\frac mN,& \text{for some integer } m, \label{eqn:case1eq}\\
|t_i - t_j|&=T\frac{2m+1}{2(N_1+N_2)},& \text{for some integer }  m.\label{eqn:case2eq}
\end{align}
It remains to show that when $M\leq N$ and $N$ does not divide $N_1+N_2$, we have the following statement: one can find $M-1$ sampling times such that their pairwise differences satisfy  \eqref{eqn:case1eq}. Without loss of generality assume that $t_1=0$. For any $t_i, t_j$, the  differences $t_i-t_1$, $t_j-t_1$ cannot simultaneously satisfy  \eqref{eqn:case2eq}, since then $t_i-t_j$ will be of the form $T(2m_i-2m_j)/(2(N_1+N_2))$ and this pairwise distance does not satisfy either of \eqref{eqn:case1eq} or \eqref{eqn:case2eq}. Therefore, at least $M-1$ points should satisfy $t_i-t_1$ with  \eqref{eqn:case1eq}. This completes the proof.

\subsection{Proof of Lemma \ref{lemma:lower:2} (Section \ref{sec:lemma2:section:abs})} \label{Lower2proof}

We use the distortion formula given in  \eqref{eqn:Distortion2mmmm2}  and minimize the distortion subject to the sampling locations. We have 
\begin{align} 
\mathsf{D}_{\min}&=
{\min_{\{t_i, i = 1, \cdots, M\}}}\frac12 \ps\sigma^2\tr(\Delta_2^{-1})\nonumber
\\&\geq \frac12\ps \sigma^2  {\min_{\{t_i, i = 1, \cdots, M\}}} \sum_{k =1}^{2N} \Delta_{2(k,k)}^{-1}
\label{Eq. 1}
 \\&= \frac12\ps \sigma^2   {\min_{\{t_i, i = 1, \cdots, M\}}}\sum_{\ell =N_1}^{N_2} \frac{1}{\eta_\ell}+\frac{1}{\gamma_\ell},\label{Eq. 6}
 \\&\geq \frac12\ps \sigma^2   {\min_{\{t_i, i = 1, \cdots, M\}}}\sum_{\ell =N_1}^{N_2} \frac{2}{(\eta_\ell+\gamma_\ell)/2}\label{Eq. 6mmm}
\\&=\frac12 \ps\sigma^2\sum_{\ell =N_1}^{N_2} 
\frac{2}{\sigma^2+\frac {M\ps }{2} }=\frac{N\sigma^2\ps }{\sigma^2+\frac {M\ps }{2}}=\frac{N\ps }{1+\frac {M}{2N} SNR },
\end{align}
where \eqref{Eq. 1} follows from Theorem \ref{thm:lemma:diag} for the convex function $f(x)=x^{-1}$ for $x>0$; \eqref{Eq. 6} is written  using \eqref{eqn:GammaEntries} where  $\eta_\ell$ and $\gamma_\ell$ are defined as
\begin{align}\eta_\ell=\sigma^2+\ps\sum_{i =1}^{M}\cos^2(\ell\omega_0 t_i),\qquad \gamma_\ell=\sigma^2+\ps\sum_{i =1}^{M}\sin^2(\ell\omega_0 t_i).\label{eqn:bet-1mmm}\end{align}
Furthermore,  \eqref{Eq. 6mmm} follows from the inequality $a^{-1}+b^{-1}\geq 4(a+b)^{-1}$  for non-negative $a$ and $b$. 

The proof for the variance is similar. Using  \eqref{Var2}, and Theorem \ref{thm:lemma:diag} for the convex function $f(x)=x^{-2}$ for $x>0$,  we have
\begin{align} 
\mathsf{V}_{\min}&= {\min_{\{t_i, i = 1, \cdots, M\}}}[\ps^2 \sigma^4 \tr(\Delta_2^{-2})]\nonumber
\\&\geq \ps^2 \sigma^4  {\min_{\{t_i, i = 1, \cdots, M\}}} \sum_{k =1}^{2N} \Delta_{2(k,k)}^{-2}\nonumber
 \\&= \ps^2 \sigma^4  {\min_{\{t_i, i = 1, \cdots, M\}}}\sum_{\ell =N_1}^{N_2} \frac{1}{\eta^2_\ell}+\frac{1}{\gamma^2_\ell}
\\&\geq \ps^2 \sigma^4  {\min_{H(.),\{t_i, i = 1, \cdots, M\}}}\sum_{\ell =N_1}^{N_2} \frac{8}{(\eta_\ell+\gamma_\ell)^2}\label{vareqnew22}
\\&\nonumber = \ps^2 \sigma^4\sum_{\ell =N_1}^{N_2} \frac{8}{(2\sigma^2+M\ps)^2}
=
N\ps^2 \sigma^4 \frac{8}{(2\sigma^2+M\ps)^2}\nonumber
=N\ps^2  \frac{2}{(1+\frac{M\ps}{2\sigma^2})^2}\nonumber
=\frac{2N\ps^2 }{(1+\frac {M}{2N} SNR)^2 }\nonumber
\nonumber,\end{align}
where  $\eta_\ell$ and $\gamma_\ell$ are given in  \eqref{eqn:bet-1mmm}, and  \eqref{vareqnew22} follows from $a^{-2}+b^{-2} \geq 8(a+b)^{-2},$ for nonzero values of $a$ and $b$.

\emph{Necessary and Sufficient conditions for tightness of the lower bounds:} For Theorem \ref{thm:lemma:diag} to be tight, 
we need $\Delta_2=\ps Q^{\dagger} Q +\sigma^2 I$ to be a  diagonal matrix. We further need $\eta_\ell=\gamma_\ell$ to have all the inequalities as equalities.  Therefore, the equality holds if and only if matrix $\Delta_2$ is  diagonal, and all the diagonal entries are equal. \\
The off-diagonal  and diagonal entries of $\Delta_2$ are given by 
$$\Delta_{2(i,k)}= \begin{cases}
  \ps \cdot \sum_{\ell=1}^{M}\cos((N_1+i-1)\omega_0t_\ell) \cos((N_1+k-1)\omega_0t_\ell)~~~;~1 \leq i<k \leq N\\
  \ps \cdot \sum_{\ell=1}^{M}\cos((N_1+i-1)\omega_0t_\ell) \sin((N_1+k-N)\omega_0t_\ell)~~;~1 \leq i\leq N\leq k \leq2 N\\
   \ps \cdot \sum_{\ell=1}^{M}\sin((N_1+i-N)\omega_0t_\ell) \sin((N_1+k-N)\omega_0t_\ell)~;~ N\leq i \leq2 N,~ N\leq k \leq2 N\\
  \end{cases}, $$
\begin{align}\Delta_2{(i,i)}= \begin{cases}
\frac{M\ps }{2}+\sigma^2 + \frac{\ps}{2}\cdot \sum_{\ell=1}^{M}\cos(2(N_1+i-1)\omega_0t_\ell)~;~ 1\leq i\leq N\\
\frac{M\ps }{2}+\sigma^2 - \frac{\ps}{2}\cdot \sum_{\ell=1}^{M}\cos(2(N_1+i-1)\omega_0t_\ell)~;~ N\leq i \leq2N \end{cases},\label{eqn:newamnewamam}\end{align} respectively.
If we put the off-diagonal entries zero and write the above equations in a simpler form, we get
\begin{align}
\begin{cases}
\sum_{\ell=1}^{M}\cos \left((N_1+k_1)\omega_0t_\ell \right)\cos \left((N_1+k_2)\omega_0t_\ell \right)=0~ ~;~~0 \leq k_1<k_2 \leq N-1\\
 \sum_{\ell=1}^{M}\cos \left((N_1+k_1)\omega_0t_\ell\right)\sin \left((N_1+k_2)\omega_0t_\ell \right)=0~~;~~0 \leq k_1, k_2\leq N-1\\
\sum_{\ell=1}^{M}\sin \left((N_1+k_1)\omega_0t_\ell \right)\sin \left((N_1+k_2)\omega_0t_\ell \right)=0~~;~~ 0\leq k_1 < k_2 \leq N-1\\
  \end{cases}. 
\end{align}
  Substituting $\omega_0= 2\pi /T$ in the above equations,  we obtain
$$\begin{cases}
\sum_{\ell=1}^{M}e^{j2\pi (k_1-k_2)\frac{t_\ell}{T}}=0~ ~;~~0 \leq k_1<k_2 \leq N-1\\ \sum_{\ell=1}^{M}e^{j2\pi (2N_1+k_1+k_2)\frac{t_\ell}{T}}=0~~;~~0 \leq k_1\leq k_2\leq N-1
  \end{cases},$$ or in another form, we have
\begin{align}
\begin{cases}
\sum_{i=1}^{M} e^{j2 \pi k\frac{t_i}{T}}=0  & \text{for} ~~0 < k \leq N-1\\
\sum_{i=1}^{M} e^{j2 \pi k\frac{t_i}{T}}=0 & \text{for} ~~2N_1 \leq k\leq 2N_2
\end{cases}.\label{eqn:solutionequality}\end{align}
These equations imply that the off-diagonal entries are all zero, and specifically,
$$\sum_{\ell=1}^{M}\cos(2(N_1+i-1)\omega_0t_\ell)=0, \qquad  1\leq i \leq2N.$$
Thus, from \eqref{eqn:newamnewamam}, the diagonal entries $\Delta_2(i,i)$ are all equal to $M\ps /2+\sigma^2$.

\subsection{Proof of Theorem \ref{T1c} (Section \ref{sec:lemma2:section:abs})}\label{proof:4:4:1}
Consider the proof of Lemma \ref{lemma:lower:2} in which the alternative   lower bound  on the average and variance of  distortion, and the necessary and sufficient conditions for their tightness were derived. With uniform sampling, \emph{i.e.,}  $t_i=\frac{iT}{M}, i=1, 2, \cdots , M$, when $2N_1+\alpha+\beta$ does not divide $M$ for integers $0\leq\alpha,\beta\leq N-1$,  matrix $\Delta_2$ will be diagonal with diagonal entries $(\sigma^2 + \frac {M\ps }{2})$. Thus this lower bounds will be tight. \\
To find the interpolation formula, consider the MMSE estimator of $\X$ from $\Y=Q\mathbf{X}+\mathbf{Z}$. The estimator $\hat{\X}$ that minimizes $E\|\mathbf{X}-\hat{\mathbf{X}}\|^2$ can be written as $W\Y$, where $W=C_{\X\Y}C_{\Y}^{-1}=\ps \Delta_2^{-1}Q^\dagger$. For the optimal sampling locations $t_i$, $\Delta_2  = (Mp/2+\sigma^2)I$, and hence
\begin{align}
 \hat{\X} = W \Y= \frac{\ps}{\frac{M\ps}{2}+\sigma^2}  Q^\dagger  \Y.
\end{align}
The estimated coefficients vector, $\hat{\X}$, results in the reconstruction formula   given in \eqref{eqn:Interpolation2}.

\subsection{Proof of Theorem \ref{T:NM2N} (Section \ref{sec:4:3})}\label{proof:4:3}
Here we are in the case of $N< M\leq 2N$. To prove the upper bound, we choose not to utilize a pre-sampling filter and use the following $M$ sampling points $$\{0, \frac{1}{N}T, \frac{2}{N}T, \frac{3}{N}T, \cdots, \frac{N-1}{N}T,~~\frac{1}{2N}T, \frac{3}{2N}T,\cdots, \frac{2M-2N-1}{2N}T\}.$$
Using  \eqref{eqn:PiEntries} , one can verify that 
$$\Delta_1=\left[\begin{matrix}{(N\ps +\sigma^2 )~ \textbf{I}_{N\times N}} && { \ps\textbf{~G}_{N\times (M-N)}}
\\
\\
{ \ps\textbf{~G}^{\dagger}_{(M-N)\times N}} && {(N\ps +\sigma^2 )~\textbf{I}_{(M-N)\times (M-N)})}
\end{matrix}\right]_ { M\times M },$$
where $\textbf{G}$  is  an $N\times (M-N)$ matrix whose $(M-N)$  columns  are the first $(M-N)$ columns of the matrix $\Phi_{N\times N}$ defined as follows: $\Phi$ is an $N\times N$ circulant matrix with the first row $\textbf {c} = [c_0,~ c_1, ~... ,~c_{N-1}]$, where
\begin{align}c_k = \frac 12\sum_{\ell=N_1}^{N_2}\big(\omega ^{-(2k+1)\ell}+\omega ^{(2k+1)\ell}\big)\label{Eq5}
\end{align}
 in which $\omega =\exp(j{\pi/N} )$. For instance, when $N_1=1$, $\Phi$ is an $N\times N$ matrix with all $-1$ entries.

Using Theorem \ref{L3} from Appendix \ref{sec:B81}, the eigenvalues of $\Delta_1-(N\ps +\sigma^2)\mathbf{I}$ are equal to $p s_i$ and $-ps_i$ (where ${s_i}$ are the singular values of $\textbf{G}_{N \times (M-N)}$), in addition to $M-2(M-N)=2N-M$ zero eigenvalues. One can find the eigenvalues of $\Delta_1$ by adding $N\ps +\sigma^2$ to the eigenvalues of $\Delta_1-(N\ps +\sigma^2)\mathbf{I}$, and from that we have
\begin{align}\tr(\Delta_1^{-1})&=\frac{2N-M}{N\ps +\sigma^2}+\sum_{i=1}^{M-N}\big(\frac{1}{N\ps +\sigma^2-ps_i}+\frac{1}{N\ps +\sigma^2+ps_i}\big)\nonumber
\\&=\frac{2N-M}{N\ps +\sigma^2}+\sum_{i=1}^{M-N}\frac{2(N\ps +\sigma^2)}{(N\ps +\sigma^2)^2-p^2 s_i^2}.\label{Eq7}\end{align}
To compute an upper bound on $\tr(\Delta_1^{-1})$, we need to find an upper bound on $s_i$. To do so, first we assume that $M=2N$, and find the singular values of the circulant matrix $\Phi_{N\times N}$ (named $\delta_i$ for $i= 1, \cdots, N$) and then use the upper bound given in Theorem \ref{T4} on the singular values of the matrix $\textbf{G}_{N \times (M-N)}$.

Singular values of $\Phi$ are the eigenvalues of $(\Phi\Phi^{\dagger})^{1/2}$. Theorem \ref{T2}  from Appendix \ref{sec:B81} states that any two circulant matrices commute and the eigenvalues of their product is the pairwise product of their eigenvalues. Since both $\Phi$ and  $\Phi^\dagger$ are circulant matrices, we conclude that $\Phi\Phi^{\dagger}$ is also a circulant matrix with eigenvalues $|\lambda_i|^2, i = 1, ..., N $, where $\lambda_i$ are eigenvalues of $\Phi$. Thus, $\delta_i= |\lambda_i|$ for $i = 1, \cdots , N.$
Using Theorem \ref{T3}  from Appendix \ref{sec:B81}, $ \lambda_i$ are given by  
\begin{align} \lambda_i =\sum_{k=0}^{N-1} c_k \omega^{-2ki}, \qquad i=1, 2, \cdots, N \label{lambda}\end{align}
where $\omega =\exp(j{\pi/N} )$.
 Substituting $c_k$ from \eqref{Eq5} into  \eqref{lambda}, we obtain 
\begin{align*} \lambda_i &=\frac 12\sum_{k=0}^{N-1}  ~\big[\sum_{\ell=N_1}^{N_2}\big(\omega ^{-(2k+1)\ell}+\omega ^{(2k+1)\ell}\big) ]~ \omega^{-2ki}
\\&=\frac 12\sum_{\ell=N_1}^{N_2} \omega ^{-\ell} \sum_{k=0}^{N-1} \omega ^{-2k(\ell+i)}+\frac 12\sum_{\ell=N_1}^{N_2}\omega ^{\ell} \sum_{k=0}^{N-1} \omega ^{2k(\ell-i)}.
\end{align*}
We then have
\begin{align}\sum_{k=0}^{N-1} \omega^{-2k(\ell+i)}= \begin{cases} N ~~~ ;  ~~~~N~|\ell+i\\0 ~~~~ ;  ~~~~N\nmid \ell+i\end{cases},\label{Eq6}
\end{align}
and
\begin{align}\sum_{k=0}^{N-1} \omega^{2k(\ell-i)}= \begin{cases} N ~~~ ;  ~~~~N~|\ell-i\\0 ~~~~ ;  ~~~~N\nmid \ell-i\end{cases}.\label{Eq6}
\end{align}
Given any arbitrary $i$, $\ell+i\in [N_1+i:N_2+i]$ are $N$ consecutive numbers, and only one of them is divisible by $N$. Let $\ell_1, \ell_2$ be unique numbers in $[N_1:N_2]$ such that $N|\ell_1+i$ and $N|\ell_2-i$. We then have
\begin{align*} \lambda_i &=\frac N2 \omega ^{-\ell_1} +\frac N2\omega ^{\ell_2} 
=\frac N2 \omega^{(\ell_2-\ell_1)/2}(\omega ^{-(\ell_2+\ell_1)/2} +\omega ^{(\ell_2+\ell_1)/2})
\\&=N \omega^{(\ell_2-\ell_1)/2}\cos(\frac{(\ell_2+\ell_1)\pi}{2N}).\end{align*}
Thus,
$|\lambda_i|=N|\cos(\frac{(\ell_2+\ell_1)\pi}{2N})|$. From $N|\ell_1+i$ and $N|\ell_2-i$, we have that $N|\ell_1+\ell_2$. But $\ell_1+\ell_2\in [2N_1:2N_2]$ consists of $2N_2-2N_1+1=2N-1$ consecutive natural numbers, and there cannot be more than two numbers that are divisible by $N$ in $[2N_1:2N_2]$. If $\ell_1+\ell_2=kN$, we will have
\begin{align*}|\lambda_i|=N|\cos(\frac{k\pi}{2})|= \begin{cases} N ~~~ ;  ~~~~k\text{ is even} ~~~~ \\0 ~~~~;  ~~~~k\text{ is odd}~~~~\end{cases}.\end{align*}
Therefore, we need to find out for the number of values of $i$, $|\lambda_i|$ is zero, and for the number values of $i$ that $|\lambda_i|$ is $N$. 

Assume that $N_1=qN+r$. Then $N_1+i=qN+r+i$ and $N_1-i=qN+r-i$. Therefore \begin{align}\ell_1=\begin{cases}N_1+N-(r+i) &r+i\leq N
\\
N_1+2N-(r+i) & r+i> N
\end{cases},
\end{align}
\begin{align}\ell_2=\begin{cases}N_1 & i-r=N\\
N_1+i-r &N-1\geq i-r\geq 0
\\
N_1+N-(r-i) & i-r<0
\end{cases},
\end{align}
By considering different cases, one gets that the number of values of $i$ where $|\lambda_i|$ is $N$ is equal to $f(N_1, N)$, where
$$f (a,b)= \begin{cases}
   b-1      & \text{if}  ~ r = 0\\
   2b-2r+1       & \text{if } 2r >b \\
   2r-1  & \text{if } 0<2r \leq b 
  \end{cases}, $$
where $q$ and $r$ are the quotient and remainder of dividing $a$ by $b$. 

Now for $N<M\leq 2N,$ we use Theorem \ref{T4}  from Appendix \ref{sec:B81} for  matrix $\Phi$ with singular values $\delta_1 \geq \delta_2, \cdots \geq \delta_N$ and submatrix $\textbf{G}_{N \times (M-N)}$  with singular values $s_1 \geq s_2 , \cdots \geq s_{M-N}$. Therefore, 
$\delta_i \geq s_i   ~~~   \text{for}   ~~~ i= 1, 2, \cdots, M-N,$
thus, $\textbf{G}_{N \times (M-N)}$ has at most $\mathsf{Num}=\min\big(f(N_1,N), M-N\big)$ non-zero singular values. Furthermore the absolute value of non-zero eigenvalues of $\mathbf{G}$ is  less than or equal to $N$. 
Using this upper bound on $s_i$ and substituting it into  \eqref{Eq7}, we get
\begin{align}\tr(\Delta_1^{-1})&=\frac{2N-M}{N\ps +\sigma^2}+
\sum_{i=1}^{M-N}\frac{2(N\ps +\sigma^2)}{(N\ps +\sigma^2)^2-p^2 s_i^2}\\&\leq \frac{2N-M}{N\ps +\sigma^2}+ \mathsf{Num} \frac{2(N\ps +\sigma^2)}{(N\ps +\sigma^2)^2-p^2 N^2}+(M-N-\mathsf{Num}) \frac{2}{(N\ps +\sigma^2)},\end{align}
Thus,  \eqref{eqn:Distortion2mmmm1} would be 
\begin{align}
2D&=(2N-M)\ps +\ps\sigma^2\cdot\tr(\Delta_1^{-1})
\\&\leq(2N-M)\ps +\ps\sigma^2\bigg\{\frac{2N-M}{N\ps +\sigma^2}+\mathsf{Num} \cdot\frac{2(N\ps +\sigma^2)}{(N\ps +\sigma^2)^2-p^2 N^2}+(M-N-\mathsf{Num}) \cdot\frac{2}{(N\ps +\sigma^2)}\bigg\},
\end{align}
which results in the following upper bound
$$\frac{\mathsf{D}_{\min}}{\ps}\leq \frac12(2N-M)+ \frac{2N-M}{2(1+SNR)}+\mathsf{Num}\cdot \frac{ 1+SNR}{1+2~ SNR }+(M-N-\mathsf{Num})\cdot \frac{1}{ 1+SNR}. $$
This concludes the proof of the upper bound.

\subsubsection{Results Needed for the Proof of Theorem \ref{T:NM2N}}\label{sec:B81}
\begin{theorem}\label{L3}\cite[Ex. 6, 17-2]{Hogben}
Consider the Jordan--Wielandt matrix of the block form
$P=\left[\begin{matrix}
\mathbf{0}&G_{m\times n}\\
{G}^{\dagger}&\mathbf{0}
\end{matrix}\right].$ Assume that $\{s_i(G)\}$ are the  singular values of  $G$. Then, the eigenvalues of $P$ are $\{\pm s_i(G)\}$ together with $|m-n|$ zeros.
\end{theorem}  
\begin{theorem}\label{T2}\cite[p. 34]{Gray}
Every  $n \times n$ circulant matrix~ $C= [c_{k-j} ] $ has eigenvectors $y^{(m)}=\frac1{\sqrt{n}}[1 , e^{ -j2\pi m/n} , ..., e^{-  j2\pi(n-1)m/n}],$ for $m = 0, 1, \cdots, n-1$ and corresponding eigenvalues $\psi_m=\sum_{k=0}^{n-1}c_k e^{- j2\pi mk/n},$ and can be expressed in the form $C=U\Psi U^*$, where $U$ has the eigenvectors as columns in order and $\Psi =diag(\psi_m)$ is a diagonal matrix with diagonal elements  $\psi_0, \psi_1, \cdots, \psi_n-1$.
\end{theorem}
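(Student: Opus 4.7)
The plan is to verify directly that each proposed vector $y^{(m)}$ is an eigenvector of $C$ with the stated eigenvalue $\psi_m$, then to establish that the collection $\{y^{(m)}\}_{m=0}^{n-1}$ is an orthonormal basis of $\mathbb{C}^n$, from which the factorization $C=U\Psi U^*$ follows by standard linear algebra.

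For the eigenvalue verification, I would compute the $\ell$-th coordinate of $Cy^{(m)}$ directly. Since the action of a circulant matrix on a vector is cyclic convolution with the sequence $(c_0,\ldots,c_{n-1})$, the $\ell$-th entry of $Cy^{(m)}$ is $\tfrac{1}{\sqrt{n}}\sum_{r=0}^{n-1}c_{\ell-r}e^{-j2\pi mr/n}$ with indices taken modulo $n$. The change of variable $k=\ell-r\bmod n$ decouples the sum: it splits as $e^{-j2\pi m\ell/n}$ (which is exactly $\sqrt{n}\,y^{(m)}_\ell$) multiplied by a sum of the form $\sum_k c_k e^{\pm j2\pi mk/n}$, which is precisely $\psi_m$. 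This yields $Cy^{(m)}=\psi_m y^{(m)}$. An equivalent and slightly more structural route is to write $C$ as a polynomial in the cyclic shift matrix $S$; since each $y^{(m)}$ is already an eigenvector of $S$, the eigenvalue of $C=\sum_k c_k S^k$ on $y^{(m)}$ is just that polynomial evaluated at the eigenvalue of $S$, which again recovers $\psi_m$.

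For orthonormality, the classical discrete Fourier identity gives $\langle y^{(m)},y^{(m')}\rangle=\tfrac{1}{n}\sum_{\ell=0}^{n-1}e^{j2\pi(m-m')\ell/n}=\delta_{m,m'}$, where the case $m\neq m'$ follows from the geometric-series cancellation $\sum_{\ell=0}^{n-1}z^\ell=0$ whenever $z$ is a nontrivial $n$-th root of unity. Hence the matrix $U$ whose columns are $y^{(0)},\ldots,y^{(n-1)}$ is unitary. Assembling the $n$ eigenvalue equations column-by-column gives $CU=U\Psi$, and right-multiplying by $U^*$ produces the desired $C=U\Psi U^*$.

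I do not anticipate a conceptual obstacle, as the result is essentially a restatement of the DFT diagonalization of convolution. The only step that demands care is the index bookkeeping modulo $n$ together with the sign convention in the exponent, so that the change of variable in the convolution produces exactly the stated form of $\psi_m$ rather than its complex conjugate.
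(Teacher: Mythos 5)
Your proof is correct and is the standard DFT-diagonalization argument; note that the paper itself offers no proof of this statement, citing it directly from Gray's monograph on circulant matrices, so there is nothing internal to compare against. The one point you rightly flag --- the sign of the exponent after the change of variable --- is resolved by noting that Gray's circulant has first row $(c_0, c_1, \ldots, c_{n-1})$, i.e.\ $(k,j)$ entry $c_{(j-k) \bmod n}$; with that convention the substitution yields exactly $\sum_{k} c_k e^{-j2\pi mk/n} = \psi_m$, whereas the literal reading $C_{kj}=c_{k-j}$ as printed in the statement would produce the opposite sign in the exponent, so your verification should be carried out with the former indexing.
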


\begin{theorem}\label{T3}\cite[p. 35]{Gray}
Let $B=[b_{k-j}]$ and $C=[c_{k-j}]$ be two $n \times n$ circulant matrices with eigenvalues 
$\beta_m=\sum_{k=0}^{n-1}b_k e^{ -j2\pi mk/n},~~ \psi_m=\sum_{k=0}^{n-1}c_k e^{ -j2\pi mk/n},$
 respectively. Then matrices $B$ and $C$ commute and $BC=CB=U\Psi U^*,$ where $\Psi= diag(\beta_m\psi_m ),$ and $BC$ is also a circulant matrix.
\end{theorem}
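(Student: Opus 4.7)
The plan is to reduce everything to the simultaneous diagonalization already provided by Theorem~\ref{T2}. The key observation is that the unitary matrix $U$ in the factorization $C=U\Psi U^*$ is the Fourier matrix, which depends only on the dimension $n$ and not on the particular circulant; hence any two $n\times n$ circulants share the same eigenbasis. Everything else reduces to elementary manipulations of diagonal matrices.

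Concretely, I would first invoke Theorem~\ref{T2} twice to write $B=U\Psi_B U^*$ and $C=U\Psi_C U^*$ with $\Psi_B=\mathrm{diag}(\beta_m)$ and $\Psi_C=\mathrm{diag}(\psi_m)$, emphasizing that the same $U$ appears in both decompositions. Using $U^*U=I$, I would then compute
\begin{equation*}
BC = U\Psi_B U^* U \Psi_C U^* = U\Psi_B \Psi_C U^*, \qquad CB = U\Psi_C \Psi_B U^*.
\end{equation*}
Since $\Psi_B$ and $\Psi_C$ are diagonal, $\Psi_B\Psi_C=\Psi_C\Psi_B=\mathrm{diag}(\beta_m\psi_m)$, which immediately gives $BC=CB=U\Psi U^*$ with $\Psi=\mathrm{diag}(\beta_m\psi_m)$ and establishes commutativity.

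It remains to show that $BC$ is itself a circulant matrix. My approach would be to characterize circulants as exactly those matrices $A$ satisfying $PAP^{-1}=A$, where $P$ is the cyclic shift permutation. Since Theorem~\ref{T2} tells us that $P$ itself is diagonalized by $U$ (it is a circulant with generating sequence a single shifted $1$, whose eigenvalues are the $n$-th roots of unity), any matrix of the form $UDU^*$ with $D$ diagonal commutes with $P$ and hence is circulant. Applying this with $D=\Psi_B\Psi_C$ yields that $BC$ is circulant. The main obstacle is the last step: making precise why $UDU^*$ with $D$ diagonal is always circulant. A cleaner alternative, should the permutation argument feel indirect, is to read off the first row of $UDU^*$ explicitly in terms of the inverse DFT of $(\beta_m\psi_m)$ and verify by direct computation that all other rows are cyclic shifts of it; this is a one-line calculation once the entries of $U$ are written out.
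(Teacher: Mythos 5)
Your proof is correct. Note that the paper itself offers no proof of this statement; it is quoted verbatim from the cited reference (Gray, p.~35) as a preliminary, and the argument you give is precisely the standard one used there: the unitary $U$ in Theorem~\ref{T2} is the DFT matrix, which depends only on $n$, so $B=U\Psi_B U^*$ and $C=U\Psi_C U^*$ share the same eigenbasis, and $BC=U\Psi_B\Psi_C U^*=CB$ follows from $U^*U=I$ together with the commutativity of diagonal matrices. For the final claim, your second, direct route is the cleaner one and closes the only step you flagged as a possible obstacle: writing $U_{km}=\tfrac{1}{\sqrt{n}}e^{-j2\pi mk/n}$ gives $(UDU^*)_{kj}=\tfrac{1}{n}\sum_m d_m e^{-j2\pi m(k-j)/n}$, which manifestly depends only on $k-j$ modulo $n$, so $BC$ is circulant; the permutation-matrix characterization $PAP^{-1}=A$ also works but requires you to separately justify the converse direction (commuting with $P$ implies circulant), which the entrywise identity $(PAP^{-1})_{ij}=A_{i-1,j-1}$ supplies.
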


\begin{theorem}\label{T4}\cite{Thopmson}
Let $A$ be an $m \times n$ matrix with singular values 
$\alpha_1\geq \alpha_2 \geq \cdots \geq \alpha_{\min(m,n)}.$
Let $B$ be a $ p\times q$ submatrix of $A$ (intersection of any $p$ rows and any $q$ columns of $A$) with singular values 
$\beta_1\geq \beta_2 \geq \cdots \geq \beta_{\min(p,q)}.$
Then,\\
$$\begin{cases}
\alpha_i \geq \beta_i  & \emph{for} ~~~ i = 1, 2, \cdots, \min(p,q)\\
 \beta_i \geq \alpha_{i+(m-p)+(n-q)} & \emph{for}~~~  i \leq \min(p+q-m, p+q-n)
\end{cases}.$$
\end{theorem}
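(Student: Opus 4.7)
The plan is to decompose the two-sided bound into its upper and lower parts and prove each separately.

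For the upper bound $\alpha_i \geq \beta_i$, I would write $B = R A C$, where $R$ is the $p \times m$ matrix consisting of the rows of $I_m$ indexed by the chosen rows, and $C$ is the $n \times q$ matrix consisting of the columns of $I_n$ indexed by the chosen columns. Both $R$ and $C$ have orthonormal rows/columns, so each has all singular values equal to $1$. Applying the standard submultiplicative singular value inequality $\sigma_i(XY) \leq \sigma_1(X)\sigma_i(Y)$ twice then gives $\beta_i = \sigma_i(R A C) \leq \sigma_i(A C) \leq \sigma_i(A) = \alpha_i$. This direction is essentially immediate.

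For the lower bound $\beta_i \geq \alpha_{i+(m-p)+(n-q)}$, my plan is to reduce to the case of deleting a single row (or a single column) at a time, and then iterate $(m-p)+(n-q)$ times. The single-deletion step is the classical Cauchy-type singular value interlacing: if $A'$ is obtained from $A$ by removing one row $a_k^\dagger$, then $\alpha_i(A) \geq \alpha_i(A') \geq \alpha_{i+1}(A)$, and analogously for column deletion. I would prove this via the Courant--Fischer max-min characterization $\alpha_i(A)^2 = \max_{\dim V = i}\min_{x \in V,\,\|x\|=1} \|A x\|^2$. The upper half is immediate because $\|A' x\|^2 = \|A x\|^2 - |a_k^\dagger x|^2 \leq \|A x\|^2$. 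For the lower half, I would take an $(i+1)$-dimensional subspace $V$ achieving the max in the characterization of $\alpha_{i+1}(A)$ and intersect it with the hyperplane $\{x : a_k^\dagger x = 0\}$, yielding a subspace $V_0$ of dimension at least $i$ on which $\|A' x\| = \|A x\|$; feeding $V_0$ into the max-min for $A'$ gives $\alpha_i(A') \geq \alpha_{i+1}(A)$.

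Iterating this single-deletion interlacing through $m-p$ row removals followed by $n-q$ column removals produces a telescoping index shift of exactly $(m-p)+(n-q)$, giving $\beta_i \geq \alpha_{i+(m-p)+(n-q)}(A)$. The index restriction $i \leq \min(p+q-m,\,p+q-n)$ is precisely what keeps $i + (m-p)+(n-q) \leq \min(m,n)$ valid throughout the iteration, so every shifted singular value on the right-hand side is well defined. The main bookkeeping subtlety, which I expect to be the only real obstacle, is handling column deletion symmetrically with row deletion; the cleanest route is to pass through $A^\dagger$ (whose singular values coincide with those of $A$) so that the row-deletion interlacing suffices as the atomic step and can simply be applied twice in succession on $A$ and on $A^\dagger$.
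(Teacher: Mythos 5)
Your proof is correct, but note that the paper itself offers no proof of this statement: Theorem \ref{T4} is quoted as a known result with a citation to Thompson's paper on interlacing inequalities for singular values of submatrices, so there is nothing in the paper to compare your argument against line by line. Your two-part strategy is sound and self-contained. The upper bound via $B = RAC$ with row/column selector matrices of unit spectral norm and the inequality $\sigma_i(XY)\leq\sigma_1(X)\sigma_i(Y)$ is the standard argument. For the lower bound, the single-row-deletion interlacing $\alpha_i(A)\geq\alpha_i(A')\geq\alpha_{i+1}(A)$ via Courant--Fischer is valid: pointwise domination $\|A'x\|\leq\|Ax\|$ gives the left half, and intersecting an optimal $(i+1)$-dimensional subspace with the hyperplane $\{x: a_k^\dagger x=0\}$ gives the right half. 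The telescoping over $(m-p)$ row deletions and $(n-q)$ column deletions (the latter handled through $A^\dagger$) produces exactly the shift $i+(m-p)+(n-q)$, and one can check that the stated index restriction $i\leq\min(p+q-m,\,p+q-n)$ keeps every intermediate singular value well defined along the chain (at each intermediate $(p+j)\times n$ or $p\times(q+k)$ stage the referenced index stays within $\min$ of the dimensions). Equivalently, the atomic step could be phrased as a rank-one Weyl perturbation of $A^\dagger A$, but your max--min version is just as clean.
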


\subsection{Proof of Theorem \ref{thm:unifMlessN} (Section \ref{sec:filter-Half})}\label{proof:4:2:2}
Since we have a pre-sampling filter  $H(\omega)$, we use the average distortion formula given in \eqref{eqn:Distortion1}:
\begin{align}2\mathsf{D}&=(2N-M)\ps +\ps\sigma^2\tr\left((\ps QLL^\dagger Q^\dagger + \sigma^2 I)^{-1}\right)
\\&=M\ps +\ps\sigma^2\tr\left((\ps QLL^\dagger Q^\dagger + \sigma^2 I)^{-1}\right)
\end{align}
One can verify that the diagonal entries of the matrix $\ps QLL^\dagger Q^\dagger + \sigma^2 I$  are $M\ps/2+\sigma^2$, but this matrix is not diagonal as
\begin{align*}(\ps QLL^\dagger Q^\dagger + \sigma^2 I)_{(2,1)}&=\ps \sum_{\ell=N_1}^{N_2}a_\ell \cos(\ell\omega_0(t_2-t_1))
=\ps \sum_{\ell=N_1}^{N_1+M/2-1}\cos(\frac{\ell\omega_0T}{M})
\\&=\ps \sum_{\ell=N_1}^{N_1+M/2-1}\cos(\frac{2\pi\ell}{M})
=-\ps \frac{\sin\left(\frac{(2N_1-1)\pi}{2M}\right)}{\sin(\frac{\pi}{M})}\neq 0,
\end{align*}
 since $M$ is even. 
Hence the equality condition in the Theorem \ref{thm:lemma:diag} is not  satisfied and
\begin{align}\mathsf{D}_{H} &= \frac12 \ps \left(M+\sigma^2\tr\left((\ps QLL^\dagger Q^\dagger + \sigma^2 I)^{-1}\right)\right)
\nonumber\\& >
\frac12 \ps \left(M+\sigma^2 \frac{M}{\frac{M\ps}{2}+\sigma^2}\right)
=\frac12 \ps \left(M+\frac{M}{1+\frac{1}{2} SNR}\right).\nonumber
\end{align}

\subsection{Proof of Proposition \ref{Proposition:Sparse} (Section \ref{sparsesection})}\label{proof:4:5:2}
To study non-uniform power constraints, we can use the same mathematical framework given in Section \ref{sec:matrix-form}. The only change is that $C_\X$ is no longer equal to $\ps I$. Here 
$C_\X $ is a diagonal matrix with the following diagonal entries
\begin{align}C_{\X(k,k)} = \begin{cases} \ps_{k+N_1-1}~~~~~~~;  ~~~~k = 1, \cdots, N
\\\ps_{k+N_1-1-N}~~~ ;  ~~~~k = N+1, \cdots, 2N\end{cases}.\label{Cxsparse}
\end{align}
Therefore, \eqref{Ce2half} results in
\begin{align}
C_e & = (Q^\dagger C_{\Z}^{-1}  Q+C_{\X}^{-1}  )^{-1}= (\frac{1}{\sigma^2 } Q^\dagger Q+C_{\X} ^{-1} )^{-1}=\ps\sigma^2 (\ps Q^\dagger Q+\ps\sigma^2 C_{\X} ^{-1} )^{-1}. \end{align}
Let $\Delta'_2=Q^\dagger Q+\sigma^2 C_{\X} ^{-1}$. Observe that when $\ps_\ell=\ps$ for all $\ell$, $A'_2$ reduces to $\Delta_2$. In fact, since $C_{\X} ^{-1}$ is a diagonal matrix, the proof of Lemma \ref{lemma:lower:2} can be directly mimicked here, with no essential changes. Similarly,  the proof of Theorem~\ref{T1c} can be adapted in a straightforward manner to show that uniform sampling, \emph{i.e.,}  $t_i=iT/{M}, i=1, 2, \cdots, M$ is a solution to the above equation if for each $k$ in the interval $ 2N_1 \leq k \leq 2N_2$, $M$ does not divide $k$. Particularly, for the rates above the Nyquist rate, \emph{i.e.,}  $M>2N_2$, uniform sampling is optimal.

\end{document}